\newtheorem{defeng}{Definition}[section]
\newtheorem{theorem}[defeng]{Theorem}
\newtheorem{lemma}[defeng]{Lemma}
\newtheorem{corollary}[defeng]{Corollary}
{\theorembodyfont{\rmfamily} }
{\theorembodyfont{\rmfamily} }
{\theorembodyfont{\rmfamily} }
{\theoremstyle{break}\theorembodyfont{\rmfamily} }
{\theoremstyle{break}\theorembodyfont{\rmfamily} }
\newcounter{claim}
\newenvironment{proof}[1][]%
 {\noindent {\setcounter{claim}{0}\sc proof #1 --- }{}}{\hfill$\Box$\vspace{2ex}} 
\begin{document}

\title{Vertex elimination orderings for hereditary graph classes}

\author{Pierre Aboulker\thanks{Universit\'e Paris 7 -- Paris Diderot,
    LIAFA,\ email: pierre.aboulker@liafa.jussieu.fr,
    pierre.charbit@liafa.jussieu.fr} , Pierre Charbit\footnotemark[1] ,\\
  Nicolas Trotignon\thanks{CNRS, LIP, ENS Lyon, UCBL, Universit\'e de
    Lyon, INRIA\ email: nicolas.trotignon@ens-lyon.fr. Partially
    supported by the French Agence Nationale de la Recherche under
    reference \textsc{anr-14-blan-STINT.}}~~and Kristina
  Vu\v{s}kovi\'c\thanks{School of Computing, University of Leeds,
    Leeds LS2 9JT, UK, and Faculty of Computer Science (RAF), Union
    University, Knez Mihajlova 6/VI, 11000 Belgrade, Serbia. \ email:
    k.vuskovic@leeds.ac.uk.  Partially supported by EPSRC grant
    EP/K016423/1 and Serbian Ministry of Education and Science
    projects 174033 and III44006.  \newline The authors are also
    supported by PHC Pavle Savi\'c grant, jointly awarded by EGIDE, an
    agency of the French Minist\`ere des Affaires \'etrang\`eres et
    europ\'eennes, and Serbian Ministry of Education and Science.
    \newline The first, second and fourth authors are partially
    supported by the French Agence Nationale de la Recherche under
    reference \textsc{anr-10-jcjc-Heredia.}\newline This work was
    supported by the LABEX MILYON (ANR-10-LABX-0070) of Universit\'e
    de Lyon, within the program "Investissements d'Avenir"
    (ANR-11-IDEX-0007) operated by the French National Research Agency
    (ANR).}}

\maketitle

{\bf\noindent AMS Classification: } 05C75

\begin{abstract}
  We provide a general method to prove the existence and compute
  efficiently elimination orderings in graphs.  Our method relies on
  several tools that were known before, but that were not put together
  so far: the algorithm LexBFS due to Rose, Tarjan and Lueker, one of
  its properties discovered by Berry and Bordat, and a local
  decomposition property of graphs discovered by Maffray, Trotignon
  and Vu\v skovi\'c. 
\end{abstract}

\section{Introduction}\label{sec:intro}

In this paper all graphs are finite and simple. A graph $G$ {\em
  contains} a graph $F$ if $F$ is isomorphic to an induced subgraph of
$G$.  A class of graphs is \emph{hereditary} if for every graph $G$ of
the class, all induced subgraphs of $G$ belong to the class.  A graph
$G$ is {\em $F$-free} if it does not contain $F$. When ${\cal F}$ is a
set of graphs, $G$ is {\em ${\cal F}$-free} if it is $F$-free for
every $F \in {\cal F}$.  Clearly every hereditary class
of graphs is equal to the class of ${\cal F}$-free graphs for some
${\cal F}$ (${\cal F}$ can be chosen to be the set of all graphs not in
the class but all induced subgraphs of which are in the class). The
induced subgraph relation is not a well quasi order (contrary for
example to the minor relation), so the set $\cal F$ does not need to
be finite.

When $X\subseteq V(G)$, we write $G[X]$ for the subgraph of $G$
induced by $X$.  An ordering $(v_1, \dots, v_n)$ of the vertices of a
graph $G$ is an \emph{$\cal F$-elimination ordering} if for every
$i=1, \dots, n$, $N_{G[\{ v_1, \dots, v_i\} ]}(v_i)$ is $\cal F$-free.
Note that this is equivalent to the existence, in every induced
subgraph of $G$, of a vertex whose neighbourhood is $\cal F$-free.

Let us illustrate our terminology on a classical example.  We denote
by $S_2$ the independent graph on two vertices.  A vertex is {\em
  simplicial} if its neighborhood is $S_2$-free, or equivalently
induces a clique.  A graph is \emph{chordal} if it is hole-free, where
a {\em hole} is a chordless cycle of length at least $4$.

\begin{theorem}[Dirac \cite{dirac:chordal}]
  \label{th:d}
  Every chordal graph admits an $\{S_2\}$-elimination ordering.
\end{theorem}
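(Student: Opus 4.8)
The plan is to reduce the statement to the classical fact that every chordal graph contains a simplicial vertex, and then to prove that fact by induction. First I would invoke the equivalence recorded just after the definition of $\cal F$-elimination orderings: since being hole-free is preserved under taking induced subgraphs, the class of chordal graphs is hereditary, so producing an $\{S_2\}$-elimination ordering of a chordal graph $G$ is the same as exhibiting, in every induced subgraph $H$ of $G$, a vertex whose neighbourhood is $\{S_2\}$-free, that is, a simplicial vertex. As every induced subgraph of a chordal graph is again chordal, it suffices to prove that \emph{every chordal graph has a simplicial vertex}; reading such vertices off one at a time (peeling and placing each at the front of the remaining prefix) then yields the ordering.

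To set up a clean induction I would strengthen the target to: \emph{every chordal graph is either complete or contains two non-adjacent simplicial vertices}, proceeding by induction on $|V(G)|$. If $G$ is complete, every vertex is simplicial and we are done. Otherwise, pick two non-adjacent vertices $a,b$ and let $S$ be an inclusion-minimal set separating them; write $A$ for the vertex set of the component of $G\setminus S$ containing $a$, and $B=V(G)\setminus(A\cup S)$, which contains $b$.

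The key step, and the place where hole-freeness is genuinely used, is the claim that $S$ is a clique. Suppose for contradiction that $x,y\in S$ are non-adjacent. By minimality of $S$, each of $x,y$ has a neighbour in $A$ and a neighbour in the component of $G\setminus S$ containing $b$; choosing a shortest $x$--$y$ path $P_A$ with interior in $A$ and a shortest such path $P_B$ with interior in the $b$-side, both $P_A$ and $P_B$ are induced, and there are no edges between their interiors (those would be edges of $G\setminus S$ joining different components). Hence the only possible chord of the cycle $P_A\cup P_B$ is the pair $xy$, which is absent; so $P_A\cup P_B$ is a hole (its length is at least $4$ since $x,y$ are non-adjacent), contradicting chordality. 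Thus $S$ is a clique.

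Finally I would apply the induction hypothesis to $G_A:=G[A\cup S]$, which is chordal. If $G_A$ is complete, then any vertex of $A$ is simplicial in $G_A$, and since vertices of $A$ have no neighbour outside $A\cup S$, it is simplicial in $G$ as well. If $G_A$ is not complete, induction gives two non-adjacent simplicial vertices of $G_A$; at most one of them lies in the clique $S$, so at least one lies in $A$ and is simplicial in $G$. In either case $G$ has a simplicial vertex in $A$, and symmetrically one in $B$; these lie in different components of $G\setminus S$, hence are non-adjacent, which completes the induction. The main obstacle is the clique-separator claim — everything else is bookkeeping once that lemma is in hand.
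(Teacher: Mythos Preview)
Your argument is correct: this is the classical Dirac proof via minimal clique separators, and the details (minimality forces every vertex of $S$ to have neighbours on both sides; shortest internal paths on each side assemble into a hole; induction on $G[A\cup S]$ using that $S$ is a clique) are all sound.

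However, this is not how the paper proves Theorem~\ref{th:d}. The paper derives it from its general machinery: Lemma~\ref{s2ldp} shows that chordal graphs are exactly the locally $\{S_2\}$-decomposable graphs, and Theorem~\ref{th:fldp-elim} (which rests on the Berry--Bordat property of LexBFS, Theorem~\ref{lastlexbfs}) then says that any LexBFS ordering of such a graph is an $\{S_2\}$-elimination ordering. So the paper never isolates a minimal separator or runs Dirac's two-simplicial-vertices induction; instead it checks a local condition and lets LexBFS do the work. Your route is more elementary and self-contained, and it yields the stronger ``two non-adjacent simplicial vertices'' conclusion; the paper's route is less direct for this single theorem but simultaneously delivers Theorem~\ref{th:t} (the ordering is produced in linear time by LexBFS) and fits chordal graphs into the same template used for all the other classes in Sections~\ref{sec:nsa} and~\ref{sec:tc}.
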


\begin{theorem}[Rose, Tarjan and Lueker \cite{rose.tarjan.lueker:lbfo}]
  \label{th:t}
  There exists a linear-time algorithm that computes an
  $\{S_2\}$-elimination ordering of an input chordal graph.
\end{theorem}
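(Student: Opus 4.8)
The plan is to compute the ordering by \emph{Lexicographic Breadth-First Search} (LexBFS) and to read off its visit order directly. First I would record the following reformulation: an ordering $(v_1,\dots,v_n)$ is an $\{S_2\}$-elimination ordering exactly when, for each $i$, the neighbours of $v_i$ lying among $\{v_1,\dots,v_{i-1}\}$ form a clique, i.e. $v_i$ is simplicial in $G[\{v_1,\dots,v_i\}]$. Equivalently, the reverse ordering $(v_n,\dots,v_1)$ is a \emph{perfect elimination ordering} in the usual sense. So it suffices to produce, in linear time, an ordering $(v_1,\dots,v_n)$ in which each $v_i$ is simplicial in $G[\{v_1,\dots,v_i\}]$.

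LexBFS does this. The algorithm maintains for every not-yet-visited vertex a \emph{label} equal to the set of its already-visited neighbours recorded by their visiting times; it repeatedly selects an unvisited vertex whose label is lexicographically largest (the dominant symbol being the \emph{earliest}-visited neighbour), appends it to the output order, and updates the labels of its neighbours. I claim the visit order $\sigma=(v_1,\dots,v_n)$ has the required property. For the running time I would implement LexBFS by \emph{ordered partition refinement}: keep the unvisited vertices grouped into classes of equal label, the classes stored in a doubly linked list sorted by decreasing label; when a vertex $v$ is selected from the first class, split every class $C$ into $C\cap N(v)$ and $C\setminus N(v)$, inserting the neighbour part immediately before the non-neighbour part. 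With doubly linked lists of vertices and classes and the usual cross-pointers, each selection scans only the edges incident to $v$, giving total time $O(n+m)$.

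For correctness I would isolate the characteristic \emph{four-point property} of LexBFS orderings: if $a<_\sigma b<_\sigma c$ with $ac\in E(G)$ and $ab\notin E(G)$, then there is a vertex $d<_\sigma a$ with $db\in E(G)$ and $dc\notin E(G)$. This follows directly from the selection rule: at the instant $b$ is chosen (before $c$, since $b<_\sigma c$), the already-visited vertex $a$ lies in the label of $c$ but not in that of $b$, so $b$ can have won the lexicographic comparison only because of some neighbour $d$ of $b$, nonadjacent to $c$, visited strictly before $a$. Granting this, the proof that $\sigma$ is an $\{S_2\}$-elimination ordering runs by a short descent that invokes chordality only to forbid a $4$-hole. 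Suppose some vertex has two nonadjacent earlier neighbours; this yields a triple $p<_\sigma q<_\sigma r$ with $pr\in E(G)$ and $pq\notin E(G)$, where $r$ is the offending vertex and $p,q$ are its two earlier neighbours. The four-point property gives $s<_\sigma p$ with $sq\in E(G)$ and $sr\notin E(G)$. If $sp\in E(G)$ then $\{s,p,r,q\}$ induces a chordless $4$-cycle (edges $sp,pr,rq,qs$; nonedges $pq,rs$), contradicting that $G$ is chordal. Otherwise $sp\notin E(G)$, and $(s,p,q)$ is again a triple satisfying the hypothesis of the four-point property, now with strictly smaller least element $s<_\sigma p$. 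As $<_\sigma$ is a finite order this descent cannot continue forever, so the $4$-hole case must eventually occur, giving the contradiction.

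The main obstacle I anticipate is not the correctness argument, which is the clean combination of the four-point property with the descent above, but the linear-time implementation: one must argue carefully that ordered partition refinement both faithfully realises the lexicographic selection rule and runs within $O(n+m)$, since a naive manipulation of labels is far too slow. I would therefore spend most of the effort on the data-structure invariants, namely that the class list stays sorted by label throughout and that every splitting operation can be charged to edges incident to the just-selected vertex.
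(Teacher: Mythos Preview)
Your overall plan is the classical Rose--Tarjan--Lueker approach and is sound, but the descent in your correctness argument has a gap. In the first step you use not only the four-point hypothesis on $(p,q,r)$ but also the extra edge $qr\in E(G)$; that edge is what closes the $4$-hole on $\{s,p,r,q\}$ when $sp\in E(G)$. After descending to $(s,p,q)$ you no longer have the analogous edge, since $pq\notin E(G)$. So if the next application of the four-point property returns $t<_\sigma s$ with $tp\in E$, $tq\notin E$ and $ts\in E$, the set $\{t,s,p,q\}$ induces only a three-edge path (edges $pt,ts,sq$; non-edges $tq,sp,pq$), not a hole. Thus ``the $4$-hole case must eventually occur'' is unjustified as stated. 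The argument is repairable --- for instance by retaining all vertices produced so far and extracting a (possibly longer) hole from them, or by replacing the descent with a suitable minimality choice on the initial bad triple --- but it needs more than simply iterating the four-point property on successive triples.

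The paper, by contrast, proves this theorem by a different route: it characterises chordal graphs as exactly the locally $\{S_2\}$-decomposable graphs (Lemma~\ref{s2ldp}) and then invokes its general result (Theorem~\ref{th:fldp-elim}, resting on the Berry--Bordat Theorem~\ref{lastlexbfs}) that every LexBFS ordering of a locally $\mathcal{F}$-decomposable graph is an $\mathcal{F}$-elimination ordering. Your direct approach, once fixed, is the more elementary one for this single statement; the paper's detour through local decomposability is what lets the same machinery cover all the other classes it studies.
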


\subsection*{Motivation, goals, and outline of the paper}

We believe that elimination orderings are important, because several
classical hereditary classes, such as perfect graphs or even-hole-free
graphs, admit deep decomposition theorems that are hard to use for
algorithmic purposes.  For more details, we send the reader to surveys
(\cite{nicolas:perfect} for perfect graphs and
\cite{evenholefreegraphs} for even-hole-free graphs). 
Sometimes, as we shall see, 
the existence of a vertex with some local structural property is
more useful for design of efficient algorithms than a global description
of the class.  Even for chordal graphs that are rather well
structured, elimination orderings are the basis for the fastest
algorithms. 

Our goal here is to give a general method to prove the existence of
elimination orderings, to compute them efficiently and to use them
to design algorithms solving problems for different hereditary classes of graphs.
Our method relies on two main ingredients that are not new but that
were not put together before:

\begin{enumerate}
\item LexBFS, a classical algorithm discovered by Rose, Tarjan and
  Lueker~\cite{rose.tarjan.lueker:lbfo}, and some of its properties
  discovered by Berry and Bordat~\cite{berry.b:lexBFS:dirac}. 
\item A local decomposition property of graphs discovered by Maffray, Trotignon and
  Vu\v skovi\'c~\cite{maffray.t.v:3pcsquare}.   This property is
  called \emph{Property ($\star$)} in \cite{maffray.t.v:3pcsquare}, but here we give it a
  more meaningful name of \emph{local decomposability}. 
\end{enumerate}

In Section~\ref{sec:lexBFS}, we explain the first ingredient, and in
Section~\ref{sec:sa} the second.  We conclude Section~\ref{sec:sa} by
illustrating how our method reproduces the classical proofs of
Theorems~\ref{th:d} and \ref{th:t}, so that we may consider the rest
of our work as a generalization of these.

In Section~\ref{sec:nsa} we give two classes of graphs for which the
existence of an ${\cal F}$-elimination ordering is proved in previous
works (namely even-hole-free graphs and square-theta-free Berge
graphs).  We explain for each of them how our method can be used prove
the existence of the order.  For even-hole-free graphs, our method
leads to speeding up the algorithm that computes a maximum clique.  To
be more specific, it turns out that the classes in
Section~\ref{sec:nsa} are slight generalizations of even-hole-free
graphs and square-theta-free Berge graphs, defined by excluding different 
Truemper configurations, that are special types of graphs 
(defined formally at the end of this section)
that play an
important role in the study of hereditary graph classes 
(see 
survey~\cite{vuskovic:truemper}).  This fact is interesting to us,
especially because Truemper configurations appear also in the following
section.

In Section~\ref{sec:tc}, we apply systematically our method to produce
classes of graphs that admit $\cal F$-elimination orderings for all
possible non-empty sets of graphs $\cal F$ made of non-complete graphs
on three vertices (there are seven such sets $\cal F$).  This leads us
to define seven classes of graphs, each of which having its own
elimination ordering by our method.  Two of these classes were
previously studied (namely universally signable graphs and wheel-free
graphs) and five of them are new.  For almost all these classes, we
get something from the ordering: a bound on the chromatic number, a
coloring algorithm, or an algorithm for the maximum clique problem.
To our great surprise, this systematic application of the method outlined
in this paper leads
again to classes that are all defined by excluding some Truemper configurations.

Section~\ref{sec:oq} is devoted to open questions. 

We now sum up the previously known optimization algorithms for which
we get better complexity (each time, we improve the
previously known complexity by at least a factor of $n$):

\begin{itemize}
\item Maximum weighted clique in even-hole-free graphs in time $O(nm)$.
\item Maximum weighted clique in universally signable graphs in time $O(n+m)$. 
\item Coloring in universally signable graphs in time $O(n+m)$. 
\end{itemize}

\subsection*{Terminology and notation}

For $x\in V(G)$, $N(x)$ denotes the set of neighbors of $x$, and
$N[x]=N(x) \cup \{x\}$.  For a set of vertices $S$, $N(S)$ denotes the
set of vertices not in $S$ that have a neighbor in $S$, and $N[S] = S
\cup N(S)$.  For $S \subseteq V(G)$, $G[S]$ denotes the subgraph of
$G$ induced by $S$, and $G - S=G[V(G)-S]$.

Recall that a {\em hole} in a graph is a chordless cycle of length at least 4,
where the {\em length} of a hole is the number of its edges.
A hole is {\em even} or {\em odd} according to the parity of its length.

Sometimes, we consider \emph{weighted graphs}, which are graphs given
with a non-negative weight for every vertex. The weight of a subset of
vertices is then the sum of the weights of its elements. The usual
problem of finding a maximum clique generalizes to weighted graphs to
the problem of finding a clique of maximum weight.

In all complexity analysis of the algorithms, $n$ denotes the number
of vertices of the input graph, and $m$ the number of edges. We say
that an algorithm runs in \emph{linear time} if its complexity is
$O(n+m)$.

\subsection*{Truemper configurations}

Special types of graphs that are called Truemper configurations appear
in different sections of this work, so let us define them now.  A {\em
  3-path configuration} is a graph induced by three internally vertex
disjoint paths of length at least~1, $P_1=x_1\ldots y_1$,
$P_2=x_2\ldots y_2$ and $P_3=x_3\ldots y_3$, such that either $x_1 =
x_2 = x_3$ or $x_1, x_2, x_3$ are all distinct and pairwise adjacent,
and either $y_1 = y_2 = y_3$ or $y_1, y_2, y_3$ are all distinct and
pairwise adjacent.  Furthermore, the vertices of $P_i\cup P_j$, $i\neq
j$, induce a hole.  Note that this last condition in the definition
implies the following.

\begin{itemize}
\item If $x_1,x_2,x_3$ are distinct (and therefore pairwise adjacent)
  and $y_1,y_2,y_3$ are distinct, then the three paths have length at
  least 1.  In this case, the configuration is called a \emph{prism}.
\item If $x_1 = x_2 = x_3$ and $y_1 = y_2 = y_3$, then the three paths
  have length at least 2 (since a path of length~1 would form a chord of the
  cycle formed by the two other paths).  In this case, the
  configuration is called a \emph{theta}.
\item If $x_1 = x_2 = x_3$ and $y_1,y_2,y_3$ are distinct, or if
  $x_1,x_2,x_3$ are distinct and $y_1 = y_2 = y_3$, then at most one
  of the three paths has length 1, and the others have length at
  least~2.  In this case, the configuration is called a
  \emph{pyramid}.
\end{itemize}

A \emph{wheel} $(H,v)$ is a graph formed by a hole $H$, called the
\emph{rim}, and a vertex $v$, called the \emph{center}, such that the
center has at least three neighbors on the rim.  A \emph{Truemper
  configuration} is a graph that is either a prism, a theta, a pyramid
or a wheel (see Figure~\ref{f:tc}).

\begin{figure}
\begin{center}\includegraphics[height=2cm]{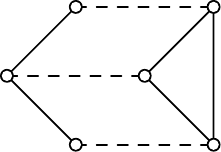}
    \hspace{.2em}
    \includegraphics[height=2cm]{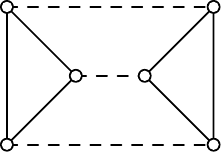}
    \hspace{.2em}
    \includegraphics[height=2cm]{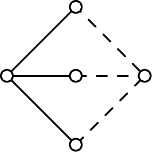}
    \hspace{.2em}
    \includegraphics[height=2cm]{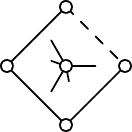}
\end{center}
  \caption{Pyramid, prism, theta and wheel (dashed lines represent paths)\label{f:tc}}
\end{figure}

\section{A theorem on LexBFS orderings}
\label{sec:lexBFS}

LexBFS is a linear time algorithm of Rose, Tarjan and
Lueker~\cite{rose.tarjan.lueker:lbfo} whose input is any graph $G$
together with a vertex $s$, and whose output is a linear ordering of
the vertices of $G$ starting at $s$.  A linear ordering of the
vertices of a graph $G$ is a \emph{LexBFS ordering} if there exists a
vertex $s$ of $G$ such that the ordering can be produced by LexBFS
when the input is $G, s$.  As the reader will soon see, we do not need
to define LexBFS more precisely.  The purpose of this section is to
provide an alternative proof of the following result.

\begin{theorem}[Berry and Bordat \cite{berry.b:lexBFS:dirac}]
  \label{lastlexbfs}
  If $G$ is a non-complete graph and $z$ is the last vertex of
  a LexBFS ordering of $G$, then there exists a connected component
  $C$ of $G - N[z]$ such that for every neighbor $x$ of $z$,
  either $N[x]=N[z]$, or $N(x) \cap C \neq \emptyset$.
\end{theorem}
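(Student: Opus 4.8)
The plan is to avoid the algorithmic description of LexBFS entirely and to work only from the standard combinatorial characterisation of a LexBFS ordering: writing $a\prec b$ when $a$ precedes $b$ in the ordering, if $a\prec b\prec c$ with $ac\in E(G)$ and $ab\notin E(G)$, then there is a vertex $d\prec a$ with $db\in E(G)$ and $dc\notin E(G)$. The whole proof is a repeated application of this four-point rule, and the single most useful instance is the one in which the rightmost vertex of the triple is $z$ itself: since $z=v_n$ is last, any application with $c=z$ returns a vertex $d$ with $dz\notin E(G)$, that is, a vertex of $\overline{N}:=V(G)\setminus N[z]$. Thus the rule, anchored at $z$, manufactures non-neighbours of $z$ on demand, and these are exactly the vertices we must reach.

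Two preliminary reductions come first. First, I may assume $G$ connected: vertices lying in components of $G$ other than the one containing $z$ form components of $G[\overline{N}]$ that contain no neighbour of $z$, so they cannot serve as the desired $C$ and do not affect the statement, while the restriction of the ordering to $z$'s component is again a LexBFS ordering. Second, $z$ is not universal: if it were, pick a non-edge $uw$ with $u\prec w$; then $u\prec w\prec z$, $uz\in E(G)$ and $uw\notin E(G)$ would force some $d\prec u$ with $dw\in E(G)$ and $dz\notin E(G)$, contradicting universality. Hence $\overline{N}\neq\emptyset$; let $a_0$ be its $\prec$-first vertex and let $C$ be the connected component of $G[\overline{N}]$ containing $a_0$. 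This $C$ is my candidate. A useful consequence of the same rule is that the $\prec$-first vertex of any component of $G[\overline{N}]$ is adjacent to every earlier neighbour of $z$ (otherwise the rule with $c=z$ produces a vertex of that component lying even earlier, contradicting minimality); in particular any vertex of $\overline{N}$ forced to lie strictly before $a_0$ is an immediate contradiction, and this is the ``floor'' I will keep hitting.

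The argument then splits into two claims. The \emph{Key Lemma} is that a neighbour $x$ of $z$ with no neighbour in $\overline{N}$ is a twin of $z$, i.e. $N[x]=N[z]$; equivalently, every non-twin neighbour of $z$ has a neighbour somewhere in $\overline{N}$. To prove it I assume $x$ has a non-neighbour $y\in N(z)$ and run the rule with $c=z$ on the pair $\{x,y\}$: if $y\prec x$ it immediately returns a neighbour of $x$ in $\overline{N}$, a contradiction; if $x\prec y$ it returns a neighbour of $y$ in $\overline{N}$, and alternating the rule between $\{x,y\}$ and the newly created vertices produces an infinite $\prec$-decreasing sequence, impossible in a finite graph. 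The \emph{Core Claim} is that every neighbour $x$ of $z$ with a neighbour $w\in\overline{N}$ in fact has a neighbour in $C$. Taking a counterexample with $w$ (then $x$) chosen $\prec$-minimal, $w$ lies in a component $C_w\neq C$, so $a_0\prec w$ and $a_0\not\sim w$; the clean case is $x\prec a_0$, where the rule on $(x,a_0,w)$ first yields a vertex $d\in N(z)$ with $d\prec a_0$ and $d\not\sim w$, and then the rule on $(d,w,z)$ yields a vertex of $\overline{N}$ strictly before $a_0$, hitting the floor.

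The main obstacle is the remaining case $a_0\prec x$ of the Core Claim, where $x$ sits after the first non-neighbour of $z$ and the direct triple no longer places a forced non-neighbour below $a_0$. Here one must combine minimality of the counterexample with the consequence above (first vertices of components dominate earlier neighbours of $z$) and chase the vertices spawned by successive applications of the rule anchored at $z$, steering each newly created non-neighbour of $z$ toward $C_w$ or $C$ until some forced vertex again falls strictly before $a_0$. Controlling this cascade, so that the auxiliary vertices the four-point rule keeps producing migrate downward past $a_0$ rather than drifting sideways among the components of $G[\overline{N}]$, is the delicate heart of the proof; everything else is bookkeeping on top of the one rule anchored at the last vertex $z$.
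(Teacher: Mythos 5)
Your candidate component is the wrong one, and this is fatal rather than cosmetic. You take $C$ to be the component of $G-N[z]$ containing the $\prec$-\emph{first} vertex $a_0$ of $V(G)\setminus N[z]$; the correct choice (and the one the paper makes) is the component containing the $\prec$-\emph{last} such vertex. Here is a counterexample to your choice. Let $G$ have vertices $s,x_1,x_2,c,z$ and edges $sx_2$, $x_1x_2$, $x_1z$, $x_2z$, $x_1c$, $x_2c$. The ordering $s\prec x_2\prec x_1\prec c\prec z$ is a LexBFS ordering (run LexBFS from $s$ and break the ties in favour of $x_1$, then $c$; alternatively, note that the condition of Theorem~\ref{regle0} holds vacuously: $s$ has no later neighbour other than the immediately following vertex, $x_2$ and $x_1$ are adjacent to everything after them, and $c$ has no later neighbour at all). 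Here $N[z]=\{z,x_1,x_2\}$, and $G-N[z]$ has the two components $\{s\}$ and $\{c\}$. Your $a_0$ is $s$, so your candidate is $C=\{s\}$; but $x_1$ is a neighbour of $z$ with $N[x_1]=\{x_1,x_2,z,c\}\neq N[z]$ and $N(x_1)\cap\{s\}=\emptyset$. The component that does work is $\{c\}$, which contains the last vertex of the ordering outside $N[z]$. Note that in this example your ``clean case'' $x\prec a_0$ is vacuous ($a_0=s$ is the very first vertex), and the failure occurs precisely in the case $a_0\prec x$ that you left open — so that case is not merely delicate, it is false for your choice of $C$.

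Independently of the wrong anchor, the proposal is incomplete: the case $a_0\prec x$ of your Core Claim is only described (``chase the vertices spawned by successive applications of the rule\dots''), not proved, and it is exactly the heart of the theorem. The parts you do prove are fine and agree with the paper: your observation that the first vertex of a component of $G[\overline N]$ is adjacent to all earlier neighbours of $z$ is correct, and your Key Lemma is the paper's first case. What the paper does to close the hard case is prove an intermediate path statement (Lemma~\ref{regle1}): for any $c\prec b\prec a$ with $ca\in E$ there is a path from $b$ to $c$ whose internal vertices avoid $N[z]$, established by induction on the sum of the positions of $a,b,c$. With $u$ the last vertex outside $N[z]$ and $C$ its component, this lemma handles both directions at once: if $x\prec u$ it yields a path from $u$ to $x$ avoiding $N[z]$, so $x$ has a neighbour in $C$; if $u\prec x$ and $u'$ is any neighbour of $x$ outside $N[z]$, then $u'\prec u\prec x$ yields a path putting $u'$ and $u$ in the same component. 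To repair your proof you would need both to switch to the last-vertex component and to find a substitute for this lemma; iterating the four-point rule alone, anchored at the first vertex of $\overline N$, cannot succeed.
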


Equivalently, if we put $z$ together with its neighbors of the first
type, the resultant set of vertices is a clique, a homogeneous set,
and its neighborhood is a minimal separator. Such sets are called
\emph{moplexes} in~\cite{berry.b:lexBFS:dirac} and
Theorem~\ref{lastlexbfs} is stated in term of moplexes
in~\cite{berry.b:lexBFS:dirac}.  We find it more convenient for our
purpose to state it as we do.  We now give an alternative proof of
Theorem~\ref{lastlexbfs} for several reasons.  First, it is
shorter than the original proof mainly because it relies on the 
following nice characterization of LexBFS orderings instead of the
full description of the algorithm. Second, we believe that
Lemma~\ref{regle1} that we use in our proof and that was not stated
explicitly before is of independent interest.

\begin{theorem}[Brandst{\"a}dt, Dragan and Nicolai~\cite{brandstadtDN:97}]
  \label{regle0}
  An ordering $\prec$ of the vertices of a graph $G=(V,E)$ is a LexBFS
  ordering if and only if it satisfies the following property: for all
  $a,b,c\in V$ such that $c \prec b \prec a$, $ca \in E$ and $cb
  \notin E$, there exists a vertex $d$ in $G$ such that $d\prec c$,
  $db \in E$ and $da \not\in E$.
\end{theorem}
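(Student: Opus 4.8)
The plan is to prove both implications of this characterization directly from the definition of LexBFS, using the only feature of the algorithm that matters: at every step it numbers an unnumbered vertex whose \emph{label} --- the set of already-numbered neighbours --- is lexicographically largest. I would fix a formalization once and for all. Writing the produced order as $v_1 \prec \cdots \prec v_n$ (so $v_i$ is the $i$-th vertex numbered), I record, for each still-unnumbered vertex $w$ just before step $i$, the binary string $\ell_i(w) = (\lambda_1, \dots, \lambda_{i-1})$ with $\lambda_j = 1$ iff $v_j w \in E$, and I compare such strings lexicographically with the \emph{first} coordinate most significant (the orientation under which LexBFS mimics BFS, giving priority to adjacency with earlier-numbered vertices). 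The one structural fact I would isolate at the outset is that $\prec$ is a LexBFS ordering if and only if, for every $i$, the vertex $v_i$ has a lexicographically maximum label among $\{v_i, \dots, v_n\}$ with respect to $v_1, \dots, v_{i-1}$; this is just an unwinding of the greedy choice together with the arbitrary tie-breaking, and it reduces the whole theorem to a statement about this step-wise maximality.

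For the forward direction (LexBFS ordering $\Rightarrow$ property), I would take a triple $c = v_p \prec b = v_q \prec a = v_r$ with $ca \in E$ and $cb \notin E$, and look at the moment step $q$ is performed. Both $b$ and $a$ are still unnumbered then, and $b$ was chosen, so $\ell_q(b) \geq \ell_q(a)$. At coordinate $p < q$ we have a $1$ in $\ell_q(a)$ (because $v_p a = ca \in E$) and a $0$ in $\ell_q(b)$ (because $v_p b = cb \notin E$); if the two labels agreed on all coordinates before $p$, then $a$ would beat $b$ at coordinate $p$, contradicting $\ell_q(b) \geq \ell_q(a)$. Hence they already differ at some $p' < p$, and at the first such difference $b$ carries the $1$ and $a$ the $0$. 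That coordinate supplies the required witness $d = v_{p'}$ with $d \prec c$, $db \in E$ and $da \notin E$.

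For the converse (property $\Rightarrow$ LexBFS ordering) I would verify step-wise maximality by contradiction. Suppose that at some step $i$ a later vertex $w = v_j$ ($j > i$) has $\ell_i(w)$ strictly larger than $\ell_i(v_i)$, and let $p$ be the first coordinate where they differ, so $v_p w \in E$, $v_p v_i \notin E$, and the labels agree on all coordinates $< p$. Applying the hypothesized property to $c = v_p \prec b = v_i \prec a = w$ --- legitimate since $v_p w = ca \in E$ and $v_p v_i = cb \notin E$ --- produces $d = v_{p'}$ with $p' < p$, $v_{p'} v_i \in E$ and $v_{p'} w \notin E$. But $p' < p$ forces the two labels to agree at coordinate $p'$, i.e.\ $v_{p'} v_i \in E \iff v_{p'} w \in E$, which the witness violates. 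This contradiction shows no later vertex can strictly beat $v_i$, so $v_i$ is (tied for) maximum and $\prec$ is a LexBFS ordering.

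I expect the only real obstacle to be bookkeeping rather than ideas: fixing the orientation of the lexicographic order correctly, and checking the degenerate cases --- notably when $v_i$ has an all-zero label while a competitor does not, and when the differing coordinate is the very first one. Both are absorbed by exactly the same argument, since the property is applied uniformly; once the formalization and the step-wise maximality reformulation are in place, each direction is a short lexicographic comparison.
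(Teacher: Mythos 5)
Your proof is correct, but there is nothing in the paper to compare it against: the paper states this theorem as a quoted result of Brandst\"adt, Dragan and Nicolai~\cite{brandstadtDN:97} and uses it as a black box (it is the sole LexBFS fact feeding into Lemma~\ref{regle1} and Theorem~\ref{lastlexbfs}); no proof of it appears in the text. What you have written is the standard argument for this characterization, and all the pieces check out. Your reduction to step-wise maximality --- $\prec$ is a LexBFS ordering iff each $v_i$ has lexicographically maximum label among $\{v_i,\dots,v_n\}$ --- is exactly where the nondeterministic tie-breaking of LexBFS enters, and it is the right reading of the paper's definition (``can be produced by LexBFS on input $G,s$''), since at step~1 all labels are empty and any start vertex $s$ is a legitimate greedy choice. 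The forward direction is sound: from $\ell_q(b)\geq \ell_q(a)$ and the forced disagreement at coordinate $p$ (where $a$ carries the $1$), the first differing coordinate must lie strictly before $p$ and must favour $b$, yielding the witness $d$. The converse is likewise sound: agreement of the labels before the first differing coordinate $p$ directly contradicts the witness $d\prec v_p$ supplied by the property. You are also right that the degenerate case $c=v_1$ (respectively $p=1$) is absorbed uniformly: in the forward direction it shows that no such triple with $c=v_1$ can occur in a LexBFS ordering, so the existential claim is vacuously satisfied, and in the converse the nonexistence of any $d\prec v_1$ gives an immediate contradiction. In short, you have supplied a complete, self-contained proof of a result the paper only cites, and it is the expected one.
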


Let us strengthen a little this property for our purposes.

\begin{lemma}\label{regle1}
  Let $\prec$ be a LexBFS ordering of  a graph $G=(V,E)$. 
  Let $z$ denote the last vertex in this
  ordering.  Then for all vertices $a,b,c\in V$ such that $c
  \prec b \prec a$ and $ca \in E$, there exists a path from $b$ to
  $c$ whose internal vertices are disjoint from $N[z]$.
\end{lemma}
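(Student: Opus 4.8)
The plan is to prove the statement for \emph{all} triples $(a,b,c)$ with $c \prec b \prec a$ and $ca \in E$ at once, by induction on the pair $(\operatorname{rank}(b),\operatorname{rank}(c))$ ordered lexicographically, where $\operatorname{rank}(v)$ is the position of $v$ in $\prec$. In the inductive step, the trivial case is $cb \in E$: then the single edge $bc$ is already a path from $b$ to $c$ with no interior, so we may assume $cb \notin E$. Now Theorem~\ref{regle0} applies to $(a,b,c)$ and hands us a vertex $d$ with $d \prec c$, $db \in E$ and $da \notin E$. (Note this already forces $c$ not to be the first vertex, so the smallest cases fall automatically under the edge case.) The whole argument then splits according to whether $d$ lies in $N[z]$.

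The easy branch is $d \notin N[z]$. Here I would apply the induction hypothesis to the triple $(b,c,d)$: we have $d \prec c \prec b$ and $db \in E$, and its middle vertex $c$ has rank strictly below that of $b$, so the hypothesis yields a path $P$ from $c$ to $d$ whose interior avoids $N[z]$. Adding the edge $bd$ to $P$ produces a path from $b$ to $c$ whose only new interior vertex is $d$; since $d \notin N[z]$, this is exactly the path we want.

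The hard part is the remaining branch $d \in N[z]$, i.e.\ $dz \in E$ (as $d \prec z$ forces $d \neq z$); this is where the recursion threatens to break, because the vertex $d$ we are routing through is itself a neighbour of $z$. The key that unlocks it is that $da \notin E$ forces $a \neq z$, hence $a \prec z$, so that $d \prec a \prec z$ with $dz \in E$ and $da \notin E$. Feeding the triple $(z,a,d)$ into Theorem~\ref{regle0} now produces a vertex $e$ with $e \prec d$, $ea \in E$ and $ez \notin E$, and it is precisely this last condition that gives $e \notin N[z]$ — the property that $d$ failed to have. So I obtain a non-neighbour $e$ of $z$ that is adjacent to $a$ and precedes $c$.

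To conclude I would route the desired path through $e$. Since $e \prec c \prec a$ and $ea \in E$, the triple $(a,c,e)$ has middle vertex $c$ of rank below $b$, so induction gives a path from $c$ to $e$ with interior disjoint from $N[z]$; and since $e \prec b \prec a$ with $ea \in E$, the triple $(a,b,e)$ has the \emph{same} middle vertex $b$ but a strictly smaller last vertex $e$, so the lexicographic induction gives a path from $b$ to $e$ with interior disjoint from $N[z]$. Concatenating these two paths at $e$ yields a $b$--$c$ walk all of whose vertices other than $b$ and $c$ lie outside $N[z]$ (here we use $e \notin N[z]$), and extracting a simple path from this walk gives the required $b$--$c$ path. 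The one point demanding care throughout is the bookkeeping of ranks, so that each of the three auxiliary triples $(b,c,d)$, $(a,c,e)$ and $(a,b,e)$ is genuinely smaller than $(a,b,c)$ in the chosen measure; this is exactly why the induction is taken on the pair $(\operatorname{rank}(b),\operatorname{rank}(c))$ rather than on a single parameter.
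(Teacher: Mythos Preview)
Your proof is correct and follows essentially the same route as the paper's: apply Theorem~\ref{regle0} to obtain $d$, and in the troublesome case $d\in N[z]$ apply it again to the triple $d\prec a\prec z$ to obtain a vertex $e\notin N[z]$, then splice together two inductively obtained paths through $e$. The only cosmetic differences are that the paper argues by minimal counterexample with respect to the \emph{sum} of the three ranks rather than your lexicographic pair $(\operatorname{rank}(b),\operatorname{rank}(c))$, and that the paper always invokes the inductive path from $c$ to $d$ first (via the triple $d\prec c\prec b$) before splitting on whether $d\in N[z]$; both orderings support exactly the same three recursive calls, so the arguments are interchangeable.
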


\begin{proof}
  By contradiction assume there exists such a triple $c\prec b \prec
  a$ for which no such path exists from $b$ to $c$. Choose this triple
  to be minimal with respect to the sum of the positions of its
  elements in the ordering. Observe that since $b$ cannot be adjacent
  to $c$, by Theorem \ref{regle0} there is a vertex $d$ such that $d
  \prec c$, $db \in E$ and $da \not\in E$. There must be a path $P$
  from $c$ to $d$ whose internal vertices are disjoint from $N[z]$
  otherwise $d\prec c\prec b$ would contradict the minimality of
  $c\prec b\prec a$. Since $db\in E$, $d$ must be a neighbor of $z$
  otherwise $P \cup \{d\}$ is a path that contradicts the
  hypothesis. In particular, $z \neq a$.  So we can apply Theorem
  \ref{regle0} to the triple $d\prec a \prec z$. Thus there is a
  vertex $e$ such that $e\prec d$, $ea \in E$ and $ez \notin E$. But
  again by minimality of $c\prec b\prec a$, there exist two paths, one
  from $e$ to $c$ (from the triple $e\prec c \prec a$), and one from
  $e$ to $b$ (from the triple $e\prec b \prec a$) whose internal
  vertices are disjoint from $N[z]$.  Since $e$ is a non-neighbor of
  $z$, the union of these paths contains a path from $b$ to $c$ whose
  internal vertices are disjoint from $N[z]$, a contradiction.
\end{proof}

With this lemma, we are now able to easily prove the aforementioned
theorem.

\begin{proof}[of Theorem \ref{lastlexbfs}] We denote by $\prec$ the
  LexBFS ordering. First observe that $V(G)-N[z]\neq \emptyset$, since
  otherwise by Theorem \ref{regle0} $G$ is complete.
  
  Let $x$ be a neighbor of $z$, and assume that
  $N(x)\subseteq N[z]$. To show  that in this case $N[x]=N[z]$, let $y$ be
  another neighbour of $z$, and assume $xy\notin E(G)$. Either $x\prec
  y$ or $y\prec x$, but in both cases, Lemma \ref{regle1} with $a=z$
  implies the existence of a neighbor of $x$ that is not a neighbor
  of $z$, contradicting the assumption.

  Now assume that $N(x)\not\subseteq N[z]$. 
  Denote by $u$ the last vertex in $\prec$ that does not belong
  to $N[z]$, and by $C$ the connected component of $G - N[z]$
  containing $u$. We now show that $C$ is the desired component. If
  $x\prec u$, then by Lemma \ref{regle1} applied to 
  $x\prec u\prec z$,
  there exists a path from $x$ to $u$ which does not meet $N[z]$, so
  $x$ must have a neighbor in $C$. 
  So we may assume that $u\prec x$ and that $u$ is not adjacent to $x$.
  Since $x$ has a
  neighbor $u'$ not belonging to $N[z]$, we must have $u'\prec u$.
  Now by Lemma \ref{regle1} applied to $u'\prec u\prec x$, 
  $u$ and $u'$ belong to the same component $C$.  
  \end{proof}

\section{Locally $\cal F$-decomposable graphs}
\label{sec:sa}

Let $\cal F$ be a set of graphs. We are interested in  graphs $G$
that admit $\cal F$-elimination orders (which is equivalent to say
that every induced subgraph of $G$ has a vertex whose neighborhood is
$\cal F$-free).  A much stronger property is the one of being {\em
  locally ${\cal F}$-free} : every vertex of $G$ has a $\cal F$-free
neighborhood. The following property, that sits between these two,  
was introduced by Maffray, Trotignon and Vu\v{s}kovi\'c
in~\cite{maffray.t.v:3pcsquare} (where it was called Property~$(\star)$).

\begin{defeng}
  Let $\cal F$ be a set of graphs. A graph $G$ is \emph{locally $\cal
  F$-decomposable} if for every vertex $v$ of $G$, every $F\in \cal F$
  contained in $N(v)$ and every connected component $C$ of
  $G-N[v]$, there exists $y\in F$ such that $y$ has a non-neighbor in
  $F$ and no neighbors in $C$. 
  
  A class of graphs ${\cal C}$ is {\em locally ${\cal
      F}$-decomposable} if every graph $G\in {\cal C}$ is locally
  ${\cal F}$-decomposable.
\end{defeng}

It is easy to see that if a graph is locally $\cal F$-decomposable,
then so are all its induced subgraphs.  Therefore, for all sets of
graphs $\cal F$, the class of graphs that are locally ${\cal
  F}$-decomposable is hereditary.

Observe that a complete graph is locally $\cal F$-decomposable for any
set of graphs $\cal F$.  On the other hand, a complete graph may fail
to have an $\cal F$-elimination ordering, but this happens only when
$\cal F$ contains graphs that are complete.  This is why in
the next theorem and in all the applications to come, we require that
no graph of $\cal F$ is complete.  

Here is now our main result.  A similar theorem was given
in~\cite{maffray.t.v:3pcsquare} with another kind of ordering (not worth defining
here) instead of LexBFS.  This ordering was also lexicographic in some
sense, but it could not be computed in linear time.

\begin{theorem}
  \label{th:fldp-elim}
  If ${\cal F}$ is a set of non-complete graphs, and $G$ is a locally
  ${\cal F}$-decomposable graph, then every LexBFS ordering of $G$ is
  an $\cal F$-elimination ordering.
\end{theorem}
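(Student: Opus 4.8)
The plan is to prove the contrapositive at the level of a single vertex: I want to show that if $z$ is the last vertex of a LexBFS ordering $\prec$ of $G$, then $N_{G}(z)$ is $\cal F$-free. Since every induced subgraph of a locally $\cal F$-decomposable graph is again locally $\cal F$-decomposable (as noted in the excerpt), and since any suffix of a LexBFS ordering restricted to an induced subgraph $G[\{v_1,\dots,v_i\}]$ behaves like a LexBFS ordering of that subgraph with $v_i$ as its last vertex, proving the statement for the last vertex $z$ will immediately give the full $\cal F$-elimination ordering by applying it to each prefix. So the heart of the matter is the single-vertex claim.

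Suppose for contradiction that $N(z)$ contains some $F \in \cal F$. The idea is to combine the local decomposability of $G$ with the structural output of Theorem~\ref{lastlexbfs}. First I would invoke Theorem~\ref{lastlexbfs} to obtain a connected component $C$ of $G - N[z]$ such that every neighbor $x$ of $z$ satisfies either $N[x] = N[z]$ or $N(x) \cap C \neq \emptyset$. Next I would apply local $\cal F$-decomposability to the vertex $v = z$, the copy of $F$ sitting inside $N(z)$, and this component $C$: this yields a vertex $y \in F$ that has a non-neighbor in $F$ and no neighbor in $C$. The goal is to derive a contradiction from the existence of $y$.

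The key step is to analyze which of the two alternatives from Theorem~\ref{lastlexbfs} the vertex $y$ falls into. Since $y$ has no neighbor in $C$, we have $N(y) \cap C = \emptyset$, so $y$ cannot be of the second type; hence it must be that $N[y] = N[z]$. But $y \in F \subseteq N(z)$, so $y$ and $z$ are adjacent, and $N[y] = N[z]$ means $y$ and $z$ have exactly the same closed neighborhood, i.e. $y$ is adjacent to every vertex of $N[z] \setminus \{y\}$. In particular, since $F \subseteq N(z) \subseteq N[z]$, the vertex $y$ would be adjacent to every other vertex of $F$. This directly contradicts the conclusion that $y$ has a non-neighbor in $F$.

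The main obstacle I anticipate is the bookkeeping needed to reduce the full ordering statement to the single last-vertex statement cleanly: one must check that the restriction of a LexBFS ordering of $G$ to a prefix $\{v_1,\dots,v_i\}$ is itself realizable as a LexBFS ordering of the induced subgraph $G[\{v_1,\dots,v_i\}]$ with last vertex $v_i$, so that Theorem~\ref{lastlexbfs} applies to $v_i$ within that subgraph. This is most painlessly handled through the characterization in Theorem~\ref{regle0}: the defining property is inherited by restriction to a prefix, since for any violating triple $c \prec b \prec a$ inside the prefix the witness $d$ produced satisfies $d \prec c$ and hence also lies in the prefix. Once this hereditary behaviour of LexBFS orderings under taking prefixes is established, the core contradiction above runs identically for each $v_i$, and the theorem follows.
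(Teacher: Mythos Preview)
Your proof is correct and follows essentially the same route as the paper's: apply Theorem~\ref{lastlexbfs} to the last vertex $z$, use local $\cal F$-decomposability on the resulting component $C$ to derive a contradiction, and extend to all prefixes via the characterization in Theorem~\ref{regle0}. The only point you leave implicit is that Theorem~\ref{lastlexbfs} requires $G$ to be non-complete, but your contradiction hypothesis (that $N(z)$ contains a non-complete $F\in\cal F$) already forces two non-adjacent vertices in $G$, so this is harmless.
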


\begin{proof}
  Let $z$ be the last vertex of a LexBFS ordering of $G$. If $G$ is
  complete, then $N(z)$ is $\cal F$-free because no graph of $\cal F$
  is complete.  Otherwise, the connected component $C$ given by
  Theorem~\ref{lastlexbfs} is such that
  every vertex of $N(z)$ that has
  non-neighbors in $N(z)$ has a neighbor in $C$.  So by definition of local $\cal
  F$-decomposability, $N(z)$ must be $\cal F$-free.

  Therefore, any LexBFS ordering is an $\cal F$-elimmination ordering, because if
  $(v_1,v_2,\ldots,v_n)$ is a LexBFS ordering, then for all $i$,
  $(v_1,v_2,\ldots,v_i)$ is a LexBFS ordering of
  $G[\{v_1,v_2,\ldots,v_i\}]$ (this follows for instance from the
  characterization of LexBFS orderings given in Theorem~\ref{regle0}).
\end{proof}

Let us now illustrate how Theorem~\ref{th:fldp-elim} can be used with
the simplest possible set made of non-complete graphs: ${\cal
  F}=\{S_{2}\}$, where $S_{2}$ is the independent graphs on two
vertices.  The following is  of course well known, but we
write its proof to illustrate our notions.

\begin{lemma}
  \label{s2ldp}
  A graph $G$ is locally $\{S_2\}$-decomposable if and only if $G$ is
  chordal.
\end{lemma}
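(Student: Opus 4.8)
The plan is to prove the two implications separately, after first unpacking what local $\{S_2\}$-decomposability means for the singleton set ${\cal F}=\{S_2\}$. An $S_2$ contained in $N(v)$ is exactly a pair of non-adjacent vertices $a,b\in N(v)$, and inside $F=\{a,b\}$ each of $a,b$ is (trivially) the unique non-neighbour of the other; so the only binding part of the definition is the requirement ``no neighbours in $C$''. Hence the defining condition collapses to: for every vertex $v$, every pair of non-adjacent vertices $a,b\in N(v)$, and every connected component $C$ of $G-N[v]$, at least one of $a$ and $b$ has no neighbour in $C$. The proof then shows that this condition holds for all $v,a,b,C$ precisely when $G$ has no hole.

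For the implication from local $\{S_2\}$-decomposability to chordality, I would argue the contrapositive. Assuming $G$ contains a hole $H=v_1v_2\cdots v_kv_1$ with $k\geq 4$, set $v=v_1$, $a=v_2$ and $b=v_k$; since $H$ is chordless and $k\geq 4$, the vertices $a,b$ are non-adjacent neighbours of $v$. Again because $H$ is chordless, none of $v_3,\ldots,v_{k-1}$ lies in $N[v_1]$, so the subpath $v_3\cdots v_{k-1}$ lies entirely in $G-N[v_1]$ and its vertices belong to a single component $C$. Then $a=v_2$ has the neighbour $v_3\in C$ and $b=v_k$ has the neighbour $v_{k-1}\in C$ (these coincide when $k=4$), so neither $a$ nor $b$ avoids $C$, violating the reformulated condition.

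For the converse, suppose the condition fails: there are a vertex $v$, non-adjacent $a,b\in N(v)$, and a component $C$ of $G-N[v]$ such that both $a$ and $b$ have a neighbour in $C$; the goal is to extract a hole. I would take $P=p_1\cdots p_m$ to be a shortest path of $G[C]$ between the sets $N(a)\cap C$ and $N(b)\cap C$ (degenerating to a single vertex when these sets meet). A single minimality argument then gives, simultaneously, that $P$ is induced, that $a$ is adjacent to $p_1$ but to none of $p_2,\ldots,p_m$, and symmetrically for $b$ and $p_m$. I claim the cycle $v\,a\,p_1\cdots p_m\,b\,v$ is a hole: it has length $m+3\geq 4$, and its only candidate chords are $ab$ (a non-edge by hypothesis), the pairs $vp_i$ (non-edges since $p_i\in C\subseteq G-N[v]$), the pairs $ap_i$ and $bp_j$ excluded by the choice of $P$, and the pairs $p_ip_j$ excluded by $P$ being induced. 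This chordless cycle contradicts chordality.

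The only delicate point is this last step: the connecting path inside $C$ must be chosen so that the resulting cycle is genuinely chordless. The right device is a \emph{set-to-set} shortest path between $N(a)\cap C$ and $N(b)\cap C$, because one minimality argument kills every type of chord at once, namely the path's own chords together with the extra adjacencies of $a$ and $b$ to interior vertices. The non-edges $vp_i$ are immediate from $C$ being a component of $G-N[v]$, and the degenerate case $m=1$ (a common neighbour of $a$ and $b$ in $C$) simply yields a $4$-hole, so it needs no separate treatment.
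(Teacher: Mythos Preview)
Your proof is correct and follows essentially the same approach as the paper's: both directions are argued by contraposition, extracting three consecutive hole vertices to witness the failure of local $\{S_2\}$-decomposability in one direction, and building a hole from $v,a,b$ together with a path through $C$ in the other. The only difference is that where the paper writes ``This clearly implies that $G$ contains a hole'', you supply the explicit set-to-set shortest-path argument that makes the resulting cycle chordless; this is a welcome elaboration but not a different method.
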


\begin{proof}
  Suppose $G$ is not locally $S_2$-decomposable.  Then
  for some $x\in V(G)$ and some connected component $C$ of $G-N[x]$,
  $G[N(x)]$ contains an induced subgraph $F$ isomorphic to $S_2$, and
  every vertex of $F$ has a neighbour in $C$. This clearly implies
  that $G$ contains a hole.

  To prove the converse, suppose that $G$ contains a hole $H$, and let
  $y,x,z$ be three consecutive vertices of $H$.  Let $C$ be the
  connected component of $G-N[x]$ that contains the vertices of $H-\{
  x,y,z\}$. Then $\{ y,z\}$ is an $S_2$ of $N(x)$, and both $y$ and
  $z$ have neighbors in $C$. Therefore, $G$ is not
  locally $S_2$-decomposable.
\end{proof}

Hence, as promised in the introduction, we reobtain the proof for
Theorems~\ref{th:d} and~\ref{th:t} by using  Lemma~\ref{s2ldp} and
Theorem~\ref{th:fldp-elim}.

\section{Even-hole-free graphs and perfect graphs}
\label{sec:nsa}

In this section, we show how local decomposability can be used to
provide elimination orderings and algorithms for even-hole-free graphs
and some Berge graphs.  A graph $G$ is {\em Berge} if $G$ and
$\overline{G}$ are odd-hole-free.  In the last few decades much
research was devoted to the study of Berge graphs, odd-hole-free
graphs and even-hole-free graphs (for surveys
see~\cite{nicolas:perfect,evenholefreegraphs}).  For all these classes
global decomposition theorems are known.  Most famously the celebrated
proof of the Strong Perfect Graph Conjecture (which states that {\em a
  graph is perfect if and only if it is Berge}) obtained in 2002 by
Chudnovsky, Robertson, Seymour and Thomas~\cite{chudnovsky.r.s.t:spgt}
is based on a decomposition theorem for Berge graphs.  Also
decomposition theorems were obtained for even-hole-free
graphs~\cite{conforti.c.k.v:eh1}, the most precise one by da Silva and
Vu\v skovi\'c~\cite{dasv}.  Unfortunately, up to now, no one knows how
these decomposition theorems can be used to design fast algorithms for
optimization problems.

The results that we present here are in fact proved for
generalizations of Berge graphs and even-hole-free graphs, the
so-called {\em signed graphs}.  We want to state their definitions
here, because we find it interesting that they make use of the same kind
of obstructions as the classes of graphs in the next section.  A graph
is {\em odd-signable} if there exists an assignment of $0,1$ weights
to its edges that makes every chordless cycle of odd weight.  A graph
is {\em even-signable} if there exists an assignment of $0,1$ weights
to its edges that makes every triangle of odd weight and every
hole of even weight.  In~\cite{truemper} Truemper proved a
theorem that characterizes graphs whose edges can be assigned $0,1$
weights so that chordless cycles have prescribed parities. The
characterization states that this can be done for a graph $G$ if and
only if it can be done for all Truemper configurations and $K_4$'s contained in
$G$. An easy consequence of this theorem when applied to odd-signable
and even-signable graphs gives the following characterizations of
these classes (see \cite{cckv-cap}).  A \emph{sector} of a wheel is a
subpath of the rim of length at least~1 whose ends are adjacent to the
center and whose internal vertices are not.  A wheel is \emph{even} if
it has an even number of sectors, and it is \emph{odd} if it has an
odd number of sectors of length 1.

\begin{itemize}
\item A graph is \emph{odd-signable} if and only if it is (theta, prism,
  even wheel)-free.
\item A graph is \emph{even-signable} if and only if it is (pyramid,
  odd wheel)-free.
\end{itemize}

We are now ready to obtain two results on vertex elimination orderings
by using local $\cal F$-decomposability.  These results were known already
(see  \cite{dsv:triangulatedvertexevenholefree} and \cite{maffray.t.v:3pcsquare}), 
and were obtained by a special kind of lexicographic ordering
of the vertices that is different from LexBFS (but more closely related to decomposition).  
Proving the existence
of the ordering directly from Theorem~\ref{th:fldp-elim} allows in both
cases for the desired  ordering to be computed in
linear-time.   A \emph{4-hole} is a hole of length 4. 

\begin{theorem} [da Silva and Vu\v{s}kovi\'c
  \cite{dsv:triangulatedvertexevenholefree}]
  \label{triangulated1}
  4-hole-free odd-signable graphs are locally hole-decomposable.
\end{theorem}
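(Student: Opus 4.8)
The plan is to show that a $4$-hole-free odd-signable graph $G$ is locally hole-decomposable, i.e.\ that for every vertex $v$, every hole $H$ contained in $N(v)$, and every connected component $C$ of $G-N[v]$, there is a vertex $y\in H$ with a non-neighbour in $H$ and no neighbour in $C$. First I would analyse the local picture: fix such a $v$, $H\subseteq N(v)$ and $C$. Since $H$ is a hole, $v$ together with $H$ already forms a wheel $(H,v)$ with center $v$, and $H$ itself is an induced cycle of length at least~$4$; because $G$ is $4$-hole-free, in fact $|V(H)|\ge 5$. I would argue by contradiction, assuming that every vertex $y\in H$ that has a non-neighbour in $H$ does in fact have a neighbour in $C$. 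Since $H$ has length at least $5$, every vertex of $H$ has a non-neighbour on $H$, so the contradiction hypothesis says \emph{every} vertex of $H$ has a neighbour in $C$.

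Next I would extract a short path through $C$ connecting two suitable vertices of $H$, to build a Truemper configuration. Because $C$ is connected and every vertex of $H$ sees $C$, I can choose vertices $a,b\in V(H)$ and a path $Q$ with interior in $C$ from $a$ to $b$ whose interior vertices have no other neighbours on $H$ except possibly at the natural places; the standard trick is to take $a,b$ and $Q$ minimal so that the attachment of $C$ to $H$ is controlled. The key structural step is then to combine $Q$ with arcs of the rim $H$ and with the center $v$: depending on how $a$ and $b$ sit on the hole and on how many neighbours the interior of $Q$ has on $H$, the subgraph $G[V(H)\cup V(Q)\cup\{v\}]$ (or an induced subgraph of it obtained by keeping $v$, the two arcs of $H$ between $a$ and $b$, and $Q$) will contain one of the forbidden Truemper configurations. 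Concretely, $Q$ together with the two rim-arcs between $a$ and $b$ yields a theta or prism when $v$ is ignored, and adding the center $v$, which has at least three neighbours on the rim, produces a wheel.

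The crux of the argument is the parity/signability accounting that rules out all of these configurations simultaneously. Here I would invoke the characterization recalled in the excerpt: an odd-signable graph is exactly a (theta, prism, even wheel)-free graph. So thetas and prisms are excluded outright, and the only Truemper configuration that can survive is an \emph{odd} wheel. I would therefore drive the case analysis toward showing that the configuration I produce is forced to be a theta, a prism, or an \emph{even} wheel, each of which is impossible. The wheel $(H,v)$ has center $v$ adjacent to all of $H$, so it has as many sectors as $H$ has vertices and all sectors have length~$1$; I would use the $4$-hole-freeness together with the attachment pattern of $Q$ to pin down the parity of the number of sectors of the relevant wheel (or to collapse it to a theta/prism), obtaining an even wheel or a theta/prism and hence a contradiction with odd-signability.

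I expect the main obstacle to be the bookkeeping in the case analysis on how the path $Q$ attaches to the hole $H$: one must choose $a$, $b$ and $Q$ carefully (taking $Q$ shortest, or with the fewest attachments to $H$) so that the induced subgraph is genuinely one clean Truemper configuration rather than a messy overlap of several, and so that the role of the center $v$ is unambiguous. The $4$-hole-free hypothesis is precisely what is needed to keep these configurations long enough (length $\ge 5$ rim, no short chords) to force the theta/prism/even-wheel structure; making sure every subcase actually lands inside the forbidden family, and that no degenerate configuration slips through, is where the real work lies.
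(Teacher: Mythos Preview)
The paper does not contain a proof of this theorem: it is quoted from da~Silva and Vu\v{s}kovi\'c~\cite{dsv:triangulatedvertexevenholefree} and used as a black box (the paper only combines it with Theorem~\ref{th:fldp-elim} to get the elimination ordering and the clique algorithm). So there is no proof here to compare your proposal against.

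On your proposal itself: the overall strategy --- assume for contradiction that every vertex of $H$ sees $C$, thread a carefully chosen path $Q$ through $C$ between two rim vertices, and force a theta, prism or even wheel inside $G[V(H)\cup V(Q)\cup\{v\}]$ --- is the natural line and is indeed what is done in~\cite{dsv:triangulatedvertexevenholefree}. But what you have written is a plan, not a proof: you say yourself that the case analysis ``is where the real work lies'', and none of it is carried out. A few concrete points worth fixing before you attempt it. First, the only nontrivial case is $|V(H)|$ odd, since if $|V(H)|$ is even then the universal wheel $(H,v)$ is already an even wheel and odd-signability fails before $C$ is even mentioned. Second, your claim that $Q$ together with the two rim-arcs of $H$ ``yields a theta or prism'' is off: three internally disjoint paths between two common endpoints give a theta, never a prism (a prism has two disjoint triangles as ends). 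Third, the $4$-hole-free hypothesis does more work than just forcing $|V(H)|\ge 5$; for instance, if a vertex $c\in C$ sees two rim vertices at distance~$2$ on $H$, then $4$-hole-freeness forces $c$ to see the rim vertex between them as well, and constraints of this kind are exactly what drive the case analysis. Until that analysis is actually executed --- choosing $Q$ minimally, classifying the attachment of its interior to $H$, and exhibiting in each subcase a specific hole on which $v$ has an even number of sectors --- this remains a sketch rather than a proof.
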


Theorems~\ref{triangulated1} and~\ref{th:fldp-elim} directly imply
that 4-hole-free odd-signable graphs admit a hole-elimination
ordering.  Theorem~\ref{triangulated1} is used
in~\cite{dsv:triangulatedvertexevenholefree} to obtain a robust
$O(n^2m)$-time algorithm for computing a maximum weighted clique in a
4-hole-free odd-signable graph (and hence in an even-hole-free graph).
We now show how to reduce this complexity to ${O}(nm)$.

\begin{theorem}\label{algo}
  There is an $O(nm)$-time algorithm whose input is a weighted graph
  $G$ and whose output is a maximum weighted clique of $G$ or a
  certificate proving that $G$ is not 4-hole-free odd-signable.
\end{theorem}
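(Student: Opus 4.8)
**

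The plan is to exploit the elimination ordering guaranteed by Theorems~\ref{triangulated1} and~\ref{th:fldp-elim}, together with the linear-time LexBFS computation, to find a maximum weighted clique by a greedy sweep through the ordering. First I would compute a LexBFS ordering $(v_1,\dots,v_n)$ of $G$ in time $O(n+m)$. If $G$ is $4$-hole-free odd-signable, then by Theorem~\ref{triangulated1} it is locally hole-decomposable, and hence by Theorem~\ref{th:fldp-elim} this LexBFS ordering is a hole-elimination ordering; that is, for each $i$, the set $N_i(v_i) := N_{G[\{v_1,\dots,v_i\}]}(v_i)$ is hole-free, i.e.\ chordal. The key structural point is that a maximum weighted clique of $G$ must, for the largest index $i$ among its vertices, be contained in $N_i[v_i] = N_i(v_i)\cup\{v_i\}$, and $G[N_i[v_i]]$ is chordal since adding the universal vertex $v_i$ to a chordal graph keeps it chordal. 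So it suffices to compute, for each $i$, a maximum weighted clique of the chordal graph $G[N_i[v_i]]$ and return the best one over all $i$.

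\medskip

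The second ingredient is that a maximum weighted clique in a chordal graph can be found in linear time: compute a simplicial elimination ordering by LexBFS (Theorems~\ref{th:d} and~\ref{th:t}), and for each vertex $w$ taken in that order evaluate the weight of $\{w\}\cup(N(w)\cap\{\text{later vertices}\})$, which is a clique by simpliciality, returning the heaviest such clique. Applying this subroutine to each of the $n$ chordal graphs $G[N_i[v_i]]$ costs $O(|V|+|E|)=O(n+m)$ per index, so $O(n(n+m))=O(nm)$ overall (using $m\ge n$ after discarding isolated vertices, or stating the bound as $O(n^2+nm)$), which matches the claimed complexity.

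\medskip

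The remaining issue is the certificate: the algorithm must either output a maximum weighted clique or a proof that $G$ is not $4$-hole-free odd-signable. The point is that the computation above does \emph{not} assume the hypothesis; it always returns some clique $K$, which is the heaviest clique meeting each $N_i[v_i]$. To verify correctness I would run, in $O(n+m)$, a check that the returned $K$ is genuinely a clique (this is automatic) and, crucially, a check that each $G[N_i[v_i]]$ was actually chordal when the chordal subroutine was applied. If at some index $i$ the graph $G[N_i[v_i]]$ is \emph{not} chordal, then the simplicial-ordering test fails and we have located a hole in $N_i(v_i)$; by Lemma~\ref{s2ldp} (or directly) this exhibits a vertex $v_i$ whose neighbourhood contains a hole together with a component of $G-N[v_i]$ dominating it, which by Theorem~\ref{triangulated1} contradicts $4$-hole-free odd-signability, and the forbidden configuration it produces (a Truemper configuration or $4$-hole) is the desired certificate.

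\medskip

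The main obstacle I anticipate is making the certificate genuinely \emph{robust}, in the sense of the cited $O(n^2m)$ algorithm: the algorithm is run without knowing in advance whether $G$ lies in the class, so every step must be self-certifying. Concretely, the delicate part is extracting from a failed chordality test on some $N_i[v_i]$ an explicit small obstruction (theta, prism, pyramid, even wheel, or $4$-hole) that certifies $G\notin$ the class, rather than merely reporting ``not chordal.'' This requires recovering an actual hole $H$ in $N_i(v_i)$ (which LexBFS-based chordality recognition can produce in linear time) and then using the LexBFS/moplex structure of Theorem~\ref{lastlexbfs} to argue that $v_i$ sees enough of $H$ and is linked into a component of $G-N[v_i]$ to force one of the excluded configurations; I expect this to be the one place where a short but careful case analysis is unavoidable.
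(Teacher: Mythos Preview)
Your approach is essentially the same as the paper's: compute a LexBFS ordering, then for each $i$ find a maximum weighted clique in the chordal graph $G[N_i[v_i]]$ in linear time, and return the best; the paper phrases this as an induction on $n$ but it is the same computation.

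The one place you diverge is the certificate, and there you make life harder than necessary. The paper's certificate is simply the failure of the chordality test on some $N_i(v_i)$: by Theorems~\ref{triangulated1} and~\ref{th:fldp-elim}, if $G$ were 4-hole-free odd-signable then \emph{every} LexBFS ordering would be a hole-elimination ordering, so exhibiting an index $i$ for which $G[N_i(v_i)]$ is not chordal (detectable in linear time per index, hence $O(nm)$ total) already certifies that $G$ is not in the class. There is no need to extract an explicit theta, prism, even wheel, or 4-hole, and the ``short but careful case analysis'' you anticipate is not required.
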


\begin{proof}
  Let ${\cal H}$ denotes the class of all holes and consider the
  following algorithm.  Compute in linear time a LexBFS ordering
  $(v_1, \ldots ,v_n)$ of $G$.  By Theorems~\ref{th:fldp-elim} and
  \ref{triangulated1}, this ordering is an ${\cal H}$-elimination
  ordering if $G$ is a 4-hole-free odd-signable graph.  Testing
  whether a graph is chordal can be done in linear time
  \cite{rose.tarjan.lueker:lbfo}, and hence it can be checked in
  ${O}(nm)$-time whether $(v_1, \ldots ,v_n)$ is an ${\cal
    H}$-elimination ordering.

  So, we may assume that $(v_1, \dots, v_n)$ is an ${\cal
    H}$-elimination ordering of $G$.  We suppose inductively that a
  maximum weighted clique of $G[\{ v_1, \dots, v_{n-1}\} ]$ is found
  in time $O((n-1)m)$.  A maximum weighted clique of $G[N[v_n]]$ can
  be found in time $O(m)$ \cite{rose.tarjan.lueker:lbfo}.  So, we now
  know a maximum weighted clique of $G[N[v_n]]$ and a maximum weighted
  clique of $G[\{ v_1, \dots, v_{n-1}\} ]$.  A maximum weighted clique
  among these is a maximum weighted clique of $G$.  All this takes
  time $O((n-1)m) + O(m) = O(nm)$.
\end{proof}

We now turn our attention to Berge graphs (or more precisely to
even-signable graphs that generalize them).
A {\em square-theta} is a theta that contains a 4-hole.  A \emph{long
  hole} is a hole of length at least~5.

\begin{theorem}[Maffray, Trotignon and Vu\v{s}kovi\'c
  \cite{maffray.t.v:3pcsquare}]\label{t2}
  Square-theta-free even-signable graphs are locally
  long-hole-decomposable.
\end{theorem}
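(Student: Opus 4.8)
The plan is to argue by contradiction. Suppose $G$ is square-theta-free and even-signable but is \emph{not} locally long-hole-decomposable. Unfolding the definition, there is a vertex $v$, a long hole $H\subseteq N(v)$, and a connected component $C$ of $G-N[v]$ such that every vertex of $H$ has a neighbour in $C$. (The clause ``$y$ has a non-neighbour in $F$'' is automatic here, since every vertex of a hole of length at least $5$ has a non-neighbour on it.) Thus $v$ is complete to $H$, $v$ is anticomplete to $C$, and $C\cap H=\emptyset$. Throughout, $v$ will play the role of a clean wheel-centre: it sees all of $H$ and none of $C$.

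The first step is to fix the parity of $H$. Since $v$ is complete to the rim $H$, the wheel $(H,v)$ has no rim vertex missed by the centre, so each of its sectors is a single edge and there are exactly $|H|$ sectors of length $1$; hence $(H,v)$ is an odd wheel as soon as $|H|$ is odd. As $G$ is even-signable (and so has no odd wheel), this forces $|H|$ to be even, and therefore $|H|\ge 6$. In particular $H$ contains two vertices $a,b$ at distance exactly $3$ along the rim, together with the short arc $A=a\,h\,h'\,b$ of length $3$ joining them; note that $A$ is an induced path, since $a,h',$ and $a,b,$ and $h,b$ are at distance $2$ or $3$ in a hole of length at least $6$.

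The core of the argument is to convert the connectivity of $C$ into a long induced detour that, glued to $A$, produces a forbidden wheel. I would take a shortest path $P$ with interior in $C$ joining $a$ to $b$; such a path exists because $a$ and $b$ each have a neighbour in $C$ and $C$ is connected, and its interior lies in $C$, hence avoids $N[v]$, so $v$ is anticomplete to the interior of $P$. Consider the cycle $H'=A\cup P$. If $H'$ is an induced hole whose only neighbours of $v$ are the four vertices of $A$, then $(H',v)$ is a wheel in which the three rim-edges of $A$ are three sectors of length $1$ and the whole detour $P$ is a single sector of length at least $2$; thus $(H',v)$ has exactly three sectors of length $1$, i.e.\ it is an odd wheel, contradicting even-signability.

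The main obstacle, and the only place where real work is needed, is to guarantee that $H'$ is clean: that no internal vertex of $P$ is adjacent to an internal vertex $h$ or $h'$ of $A$. The path $P$ itself is chordless because it is shortest, and $v$ is already anticomplete to its interior, so only these cross-chords must be excluded. I would handle a violating chord by an extremal argument. If some $p\in C$ on $P$ has two non-adjacent neighbours $t,t'$ on $H$, then $v\,t\,p\,t'$ is a $4$-hole (as $vp\notin E$ and $tt'\notin E$), and a third internally disjoint $v$--$p$ path routed through $C$ (again by a minimal routing) completes a square-theta, contradicting the hypothesis; hence every vertex of $C$ has a clique, and so an edge, for its neighbourhood on $H$. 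Under this restriction any chord from $P$ to the interior of $A$ lands on an edge of $A$, and rerouting $P$ through the corresponding neighbour either shortens the detour or shifts it to another odd arc, contradicting the minimality of $P$ (and of the whole configuration, which I would choose with $|C|$ smallest). The residual degenerate cases, where the detour collapses or the attachments concentrate, are closed in the same spirit, each time exhibiting a square-theta, an odd wheel, or a pyramid, all excluded by square-theta-freeness and even-signability. The hardest part is thus entirely local: the exhaustive but routine chord analysis ensuring that $H'$ is an induced hole on which $v$ has exactly three neighbours of the right parity.
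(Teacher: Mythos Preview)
The paper does not prove Theorem~\ref{t2}; it is quoted from~\cite{maffray.t.v:3pcsquare} and used as a black box, so there is no in-paper argument to compare against. Your task therefore amounts to reproving a result whose original proof is considerably more delicate than your sketch acknowledges.

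Your setup and the parity step are correct: the universal wheel $(H,v)$ forces $|H|$ to be even, hence $|H|\ge 6$, and gluing a length-$3$ arc $A=a\,h\,h'\,b$ of $H$ to a shortest $a$--$b$ detour $P$ through $C$ in order to manufacture an odd wheel with exactly three short sectors is a sound line of attack. The genuine gap is the chord analysis, which you dismiss as ``routine'' but which is in fact the entire content of the theorem.

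Two concrete failures. First, your intermediate claim that a vertex $p\in C$ with two non-adjacent neighbours $t,t'$ on $H$ yields a \emph{square-theta} is not established and is sometimes false as stated. You do get the $4$-hole $v\,t\,p\,t'$, but the promised ``third internally disjoint $v$--$p$ path routed through $C$'' is not free: $v$ is anticomplete to $C$, so such a path must leave $v$ through some $w\in N(v)$ with $w\notin N[t]\cup N[t']$, and nothing you have produces such a $w$. Already when $|H|=6$ and $t,t'$ are antipodal on $H$, every vertex of $H$ is adjacent to $t$ or to $t'$, so no $w\in H$ works; the actual contradiction in that case is an odd wheel on the $5$-hole $t\,h_2\,h_3\,t'\,p$ with centre $v$, not a square-theta. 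So the case analysis cannot be waived.

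Second, even if every $p_i$ on $P$ meets $H$ in a clique (hence an edge), this does \emph{not} eliminate chords from $\{h,h'\}$ to the interior of $P$: $p_i$ may perfectly well see the edge $ah$ or the edge $hh'$. Your ``rerouting'' step is not an argument. Shortening is blocked because $P$ is already shortest (for instance $p_1$ can be adjacent to both $a$ and $h$ without contradicting minimality of $P$), and ``shifting to another odd arc'' simply restarts the construction with no termination measure; invoking ``$|C|$ smallest'' does not help, since $C$ is fixed once $v$ and $H$ are fixed. A correct proof must choose the arc and the detour together under a well-founded minimality and then carry out an honest case split on the attachment pattern of the offending chord, exhibiting in each branch a pyramid, an odd wheel, or a square-theta. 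That case split is precisely what~\cite{maffray.t.v:3pcsquare} performs, and your sketch does not substitute for it.
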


Again, Theorems~\ref{t2} and~\ref{th:fldp-elim} directly imply that
square-theta-free even-signable graphs admit a long-hole-elimination ordering.
Based on Theorem~\ref{t2} an ${O}(n^7)$-time robust algorithm is given
in~\cite{maffray.t.v:3pcsquare} for computing a maximum weighted
clique in a square-theta-free Berge graph (note that this class
generalizes both 4-hole-free Berge graphs and claw-free Berge graphs).
It relies on a long-hole-elimination ordering.  With the machinery
presented here, we can obtain this ordering in linear time, but
unfortunately, this does not improve the overall complexity of the
maximum clique algorithm.

\section{Seven generalizations of chordal graphs}
\label{sec:tc}

In this section we apply systematically our method to all possible
sets made of non-complete graphs of order~3.  This leads to 
seven classes of graphs, two of which were studied before (namely
universally signable graphs and wheel-free graphs). 

To describe the classes of graphs that we obtain, we need to be more
specific about wheels.  A wheel is a \emph{1-wheel} if for some
consecutive vertices $x, y, z$ of the rim, the center is adjacent to
$y$ and non-adjacent to $x$ and $z$.  A wheel is a \emph{2-wheel} if
for some consecutive vertices $x, y, z$ of the rim, the center is
adjacent to $x$ and $y$, and non-adjacent to $z$.  A wheel is a
\emph{3-wheel} if for some consecutive vertices $x, y, z$ of the rim,
the center is adjacent to $x$, $y$ and $z$.  Observe that a wheel can
be simultuneously a 1-wheel, a 2-wheel and a 3-wheel.  On the other hand, every wheel
is a 1-wheel, a 2-wheel or a 3-wheel.  Also, any 3-wheel is either a
2-wheel or a \emph{universal wheel} (that is a wheel whose center is
adjacent to all vertices of the rim).

Up to isomorphism, there are four graphs on three vertices, and three
of them are not complete.  These three graphs (namely the independent
graph on three vertices denoted by $S_3$, the path of length~2 denoted
by $P_3$ and its complement denoted by $\overline{P_3}$) are studied
in the next lemma.

\begin{lemma}\label{l:main}
  For a graph $G$ the following hold.
  \begin{itemize}
  \item[(i)] $G$ is locally $\{S_3\}$-decomposable if and
    only if $G$ is (1-wheel, theta, pyramid)-free.
  \item[(ii)] $G$ is locally  $\{P_3\}$-decomposable if and
    only if $G$ is 3-wheel-free.
  \item[(iii)] $G$ is locally $\{\overline{P_3}\}$-decomposable if and
    only if $G$ is (2-wheel, prism, pyramid)-free.
  \end{itemize}
\end{lemma}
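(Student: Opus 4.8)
The plan is to prove each of the three equivalences by the same two-directional strategy that was used for the chordal case in Lemma~\ref{s2ldp}: for the forward direction, assume $G$ fails to be locally decomposable and produce a forbidden configuration; for the converse, start from a forbidden configuration and exhibit the failure of local decomposability directly. Throughout, the key object is a vertex $v$, an induced copy $F$ of the three-vertex graph sitting inside $N(v)$, and a connected component $C$ of $G-N[v]$ such that \emph{every} vertex of $F$ that has a non-neighbor in $F$ has a neighbor in $C$. The failure of local $\{X\}$-decomposability for $X\in\{S_3,P_3,\overline{P_3}\}$ is exactly the existence of such a triple $(v,F,C)$, so I must translate each such triple into the claimed Truemper configuration and back.

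Let me outline case (i), the $S_3$ case. Here $F=\{a,b,c\}$ is an independent set in $N(v)$, and each of $a,b,c$ (all having non-neighbors in $F$) has a neighbor in $C$. Since $C$ is connected, I can route three paths through $C$, one from each of $a,b,c$, meeting $C$ and otherwise avoiding $N[v]$; choosing them to be shortest/chordless and using that $a,b,c$ are pairwise non-adjacent and all adjacent to $v$, I expect to extract either a theta or a pyramid with apex $v$ (the three paths leave $v$, traverse $C$, and reconnect), while a single such path of the right kind around $v$ with one chord gives a $1$-wheel. The converse direction takes a $1$-wheel, theta, or pyramid and reads off $v$ (the center, or the common endpoint), the relevant independent triple in its neighborhood, and the component $C$ containing the ``far side'' of the configuration. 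Cases (ii) and (iii) are structurally parallel: for $P_3$ (an induced path $a\!-\!b\!-\!c$, i.e. exactly one non-edge missing so that $b$ is the only vertex with \emph{no} non-neighbor — care is needed here since the decomposability condition only constrains vertices having a non-neighbor in $F$, namely $a$ and $c$) one obtains $3$-wheels, and for $\overline{P_3}$ (one edge, $a\!-\!b$ present, $c$ isolated in $F$) one obtains $2$-wheels, prisms, and pyramids.

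The main obstacle I anticipate is bookkeeping which vertices of $F$ are actually forced to have a neighbor in $C$, and turning the resulting ``each relevant vertex touches $C$'' condition into a genuine induced Truemper configuration rather than just some connected gadget. The definition only requires a neighbor in $C$ for vertices having a non-neighbor in $F$, so in the $P_3$ and $\overline{P_3}$ cases the vertex with no non-neighbor ($b$ in $P_3$; note in $\overline{P_3}$ every vertex has a non-neighbor) behaves asymmetrically, and this is precisely what distinguishes $1$-, $2$-, and $3$-wheels. I would handle this by carefully choosing the paths through $C$ to be chordless and internally disjoint from $N[v]$, then performing a case analysis on how the endpoints in $N(v)$ attach: whether two paths share an endpoint (yielding theta/wheel-type centered configurations) or have distinct adjacent endpoints (yielding prism/pyramid). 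Verifying that the holes required in the definition of prism, theta, and pyramid are genuinely chordless — i.e. that no unexpected chords appear through $v$ or between the paths — is the delicate point, and it is exactly where the precise adjacency pattern of $F$ (independent for $S_3$, one edge for $\overline{P_3}$, one non-edge for $P_3$) selects which configurations can arise.

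For the converse directions I expect the work to be routine: given a forbidden configuration, I identify its center (for wheels) or a path-endpoint vertex (for thetas, prisms, pyramids) as the vertex $v$, locate the prescribed three-vertex induced subgraph $F$ among $v$'s neighbors using the $1$-/$2$-/$3$-wheel definitions and the length conditions built into the definitions of prism, theta, and pyramid, and take $C$ to be the component of $G-N[v]$ containing the remainder of the configuration; connectivity of that remainder and the adjacencies forced by the configuration then show each relevant vertex of $F$ has a neighbor in $C$, so local decomposability fails. The only care needed is to confirm, using the minimum path-length constraints stated in the definitions (length $\geq 2$ for theta, the ``at most one path of length $1$'' clause for pyramid), that the far portion of the configuration really lands outside $N[v]$ and stays connected.
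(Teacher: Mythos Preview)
Your proposal is correct and follows essentially the same strategy as the paper: both directions proceed as you outline, with the easy direction reading off $(v,F,C)$ from a forbidden configuration and the hard direction routing paths through $C$ followed by case analysis on attachments. The paper organizes the hard direction slightly differently---rather than three paths it first builds one chordless path $P$ through $C$ between two vertices of $F$, then a shortest path $Q$ from the third vertex to the interior of $P$, and splits cases on whether the $P$-endpoints have neighbors on $Q$ (the 1-wheel, respectively 2-wheel, appearing precisely in that sub-case)---but this is a tactical refinement of the same idea rather than a different approach.
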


\begin{proof}
  To prove (i), first observe that if $G$ contains a 1-wheel, theta or
  pyramid $H$, then $H$ contains vertices $v,x,y,z$ such that $\{
  x,y,z\}$ induces an $S_3$, $\{ x,y,z \}\subseteq N(v)$, and $H'=H-\{
  v,x,y,z\}$ is a connected subgraph of $G-N[v]$ such that every
  vertex of $\{ x,y,z\}$ has a neighbour in $H'$.

  To prove the converse, let $v\in V(G)$ be such that $G[N(v)]$
  contains $S_3$, and $C$ a component of $G-N[v]$ such that every
  vertex of $S_3$ has a neighbor in $C$.  Denote by $x, y, z$ the
  three members of $S_3$.  Let $P$ be a chordless path from $x$ to $y$
  with interior in $C$.  Let $Q$ be a chordless path from $z$ to $z'$,
  such that $V(Q)-\{z\} \subseteq C$, $z'$ has neighbors in the
  interior of $P$, and is of minimum length among such paths
  (possibly, $Q=z=z'$).

  Suppose that at least one of $x$ or $y$ has neighbors in $Q$ (this
  implies that $Q$ has length at least~1).  Call $w$ the vertex of $Q$
  closest to $z$ along $Q$, that has neighbors in $\{x, y\}$, and
  suppose up to symmetry that $w$ is adjacent to $y$.  Call $w'$ the
  vertex of $Q$ closest to $z$ along $Q$ that has neighbors in
  $P-y$. Call $x'$ the neighbor of $w'$ in $P$, closest to $x$ along
  $P$.  Now, $V(xPx') \cup V(zQw') \cup \{v, y\}$ induces a theta or a
  1-wheel centered at $y$.

  Therefore, we may assume that none of $x, y$ has a neighbor in
  $Q$. If $z'$ has a unique neighbor in $P$, then $V(P) \cup V(Q) \cup
  \{v\}$ induces a theta.  If $z'$ has exactly two neighbors in $P$
  that are adjacent, then $V(P) \cup V(Q) \cup \{v\}$ induces a
  pyramid.  Otherwise, $V(P) \cup V(Q) \cup \{v\}$ contains a theta.
  
  \vspace{2ex}
  
  To prove (ii), first observe that if $G$ contains a 3-wheel $H$,
  then $H$ contains vertices $v,x,y,z$ such that $x,y,z$ is a $P_3$,
  $\{ x,y,z \}\subseteq N(v)$, and $H'=H-\{ v,x,y,z\}$ is a connected
  subgraph of $G-N[v]$ such that both $x$ and $z$ have a neighbor in
  $H'$.  To prove the converse, let $v\in V(G)$ be such that $G[N(v)]$
  contains a chordless path $xyz$, and $C$ a component of $G-N[v]$
  such that $x$ and $z$ both have a neighbor in $C$.  Then clearly $C
  \cup \{ v,x,y,z\}$ contains a 3-wheel.

  \vspace{2ex}

  To prove (iii), first observe that if $G$ contains a 2-wheel, prism
  or pyramid $H$, then $H$ contains vertices $v,x,y,z$ such that $\{
  x,y,z\}$ induces a $\overline{P_3}$, $\{ x,y,z \}\subseteq N(v)$,
  and $H'=H-\{ v,x,y,z\}$ is a connected subgraph of $G-N[v]$ such
  that every vertex of $\{ x,y,z\}$ has a neighbor in $H'$.

  To prove the converse, let $v\in V(G)$ be such that $G[N(v)]$
  contains $\overline{P_3}$, and $C$ a component of $G-N[v]$ such that
  every vertex of $\overline{P_3}$ has a neighbor in $C$.  Denote by
  $x, y, z$ the vertices of $\overline{P_3}$ in such a way that $xy$
  is the only edge of $G[\{ x, y, z\} ]$.  Let $P$ be a path from $x$ to $y$
  with interior in $C$ whose unique chord is $xy$.  Let $Q$ be a
  chordless path from $z$ to $z'$, such that $V(Q)-\{z\} \subseteq C$,
  $z'$ has neighbors in the interior $P$, and is of minimum length
  among such paths (possibly, $Q=z=z'$).

  Suppose that at least one of  $x$ or $y$ has neighbors in $Q$.
  Call $w$ the vertex of $Q$ closest to $z$ along $Q$, that has
  neighbors in $\{x, y\}$, and suppose up to symmetry that $w$ is
  adjacent to $y$.  Call $w'$ the vertex of $Q$ closest to $z$ along
  $Q$ that has neighbors in $P-y$. Call $x'$ the neighbor of $w'$ in
  $P$, closest to $x$ along $P$.  Now, $V(xPx') \cup V(zQw') \cup \{v,
  y\}$ induces a 2-wheel centered at $y$. 

  Therefore, we may assume that none of $x, y$ has a neighbor in $Q$.
  If $z'$ has a unique neighbor in $P$, then $V(P) \cup V(Q) \cup
  \{v\}$ induces a pyramid or a 2-wheel (when $P$ has length~2).  If
  $z'$ has exactly two neighbors in $P$ that are adjacent, then $V(P)
  \cup V(Q) \cup \{v\}$ induces a prism.  Otherwise, $V(P) \cup V(Q)
  \cup \{v\}$ contains a pyramid.
\end{proof}

The next lemma allows us to combine the results of the previous one. 

\begin{lemma}
  \label{l:union}
  Let ${\cal F}, {\cal F}', {\cal H}, {\cal H'}$ be sets of graphs
  such that $\cal F$ and $\cal F'$ contain only non-complete graphs.
  Suppose that the class of locally ${\cal F}$-decomposable graphs is
   equal to the class of ${\cal H}$-free graphs, and that the class
  of locally ${\cal F}'$-decomposable graphs is equal to the class of
  ${\cal H}'$-free graphs.  Then, the class of locally $({\cal F} \cup
  {\cal F}')$-decomposable graph is is equal to the class of $({\cal
    H} \cup {\cal H}')$-free graphs.
\end{lemma}

\begin{proof}
  Suppose that $G$ is locally $({\cal F} \cup {\cal
    F}')$-decomposable.  From the definition of local decomposability,
  it follows that $G$ is locally ${\cal F}$-decomposable and locally
  ${\cal F}'$-decomposable.  Hence, $G$ is both ${\cal H}$-free and
  ${\cal H}'$-free.  It is therefore $({\cal H} \cup {\cal H}')$-free.

  Suppose conversely that $G$ is $({\cal H} \cup {\cal H}')$-free.
  Then $G$ is ${\cal H}$-free and ${\cal H}'$-free.  It is thereofore
  locally ${\cal F}$-decomposable and locally ${\cal
    F}'$-decomposable.  From the definition of local decomposability,
  it follows that $G$ is locally $({\cal F} \cup {\cal
    F}')$-decomposable.
\end{proof}

\begin{table}\begin{center}
\begin{tabular}{cccc}
  $i$&Class ${\cal C}_i$&${\cal F}_i$&Neighborhood\\\hline
  1 &\parbox[c]{4cm}{
  \begin{center} (1-wheel, theta, 
      pyramid)-free
      \end{center}
      }&$\left\{\parbox[c]{.55cm}{\includegraphics[width=.5cm]{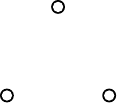}}\right\}$
  &\parbox[c]{4cm}{\begin{center}no stable set of size 3\end{center}}\\
  2 & 3-wheel-free&$\left\{\parbox[c]{.55cm}{\includegraphics[width=.5cm]{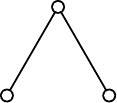}}\right\}$
  &\parbox[c]{4cm}{\begin{center}disjoint union of cliques\end{center}}\\
  3 &\parbox[c]{4cm}{\begin{center} (2-wheel,  prism, 
      pyramid)-free\end{center}}&$\left\{\parbox[c]{.55cm}{\includegraphics[width=.5cm]{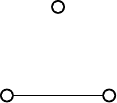}}\right\}$
  &\parbox[c]{4cm}{\begin{center}complete multipartite\end{center}}\\
  4 &\parbox[c]{4cm}{\begin{center} (1-wheel, 3-wheel, theta, 
      pyramid)-free\end{center}}&$\left\{\parbox[c]{1.3cm}{\includegraphics[width=.5cm]{fig-7.pdf}\,, \includegraphics[width=.5cm]{fig-8.pdf}}\right\}$
  &\parbox[c]{4cm}{\begin{center}disjoint union of at most two cliques\end{center}}\\
  5 &\parbox[c]{4cm}{\begin{center} (1-wheel, 2-wheel,  prism,
      theta, pyramid)-free\end{center}}&$\left\{\parbox[c]{1.3cm}{\includegraphics[width=.5cm]{fig-7.pdf}\,, \includegraphics[width=.5cm]{fig-9.pdf}}\right\}$
  &\parbox[c]{4cm}{\begin{center}stable sets of size at most 2 with
      all possible edges between them\end{center}}\\
  6 &\parbox[c]{4cm}{\begin{center} (2-wheel, 3-wheel, prism, 
      pyramid)-free\end{center}}&$\left\{\parbox[c]{1.3cm}{\includegraphics[width=.5cm]{fig-8.pdf}\,, \includegraphics[width=.5cm]{fig-9.pdf}}\right\}$
  &\parbox[c]{4cm}{\begin{center}clique or stable set\end{center}}\\
  7 &\parbox[c]{4cm}{\begin{center} (wheel, prism, theta, pyramid)-free\end{center}}&$\left\{\parbox[c]{2.2cm}{\includegraphics[width=.5cm]{fig-7.pdf}\,, \includegraphics[width=.5cm]{fig-8.pdf}\,, \includegraphics[width=.5cm]{fig-9.pdf}}\right\}$
  &\parbox[c]{4cm}{\begin{center}clique or stable set of size~2\end{center}}\\  8 & hole-free&
  $\left\{\parbox[c]{.3cm}{\ \includegraphics[height=.5cm]{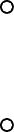}\
    }\right\}$  &\parbox[c]{4cm}{\begin{center}clique\end{center}}\\
\end{tabular}
\end{center}
\caption{Eight classes of graphs\label{t:c}}
\end{table}

\begin{figure}
  \center
  \includegraphics{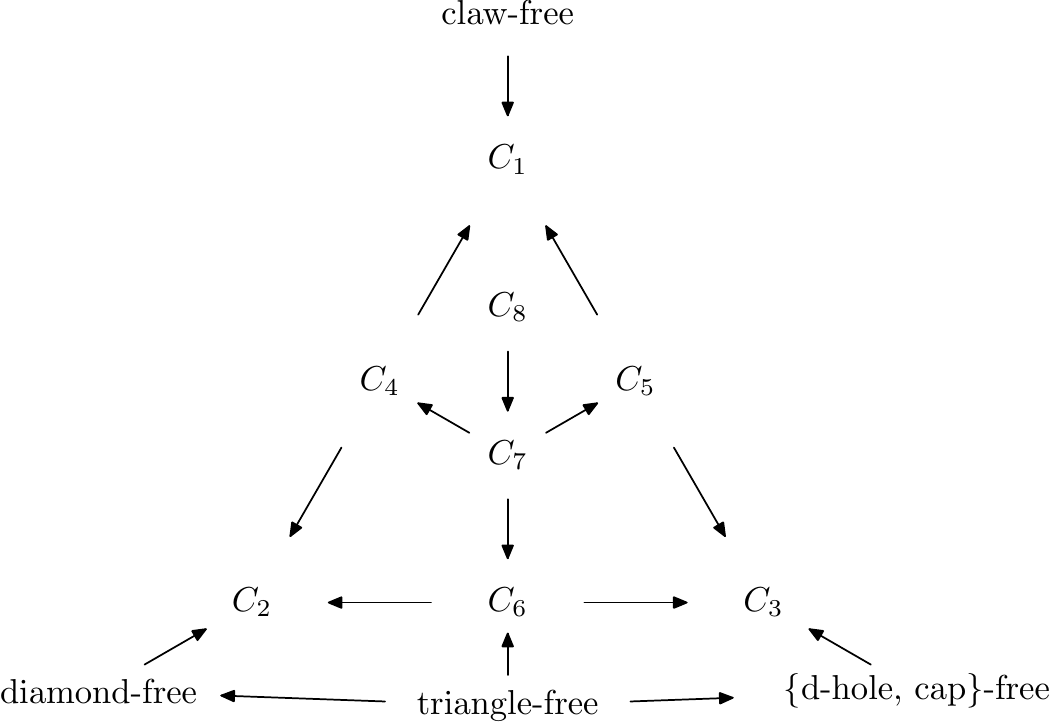}
  \caption{Inclusion for several classes of graphs. An arrow from $A$ to
    $B$ means ``$A$ is contained in $B$''.  Arrows arising from
    transitivity are not represented.\label{f:inc}}
\end{figure}

Table~\ref{t:c} describes eight different classes of graphs ${\cal
  C}_1, \dots, {\cal C}_8$, all defined by excluding induced subgraphs
described in the second column of the table (one of them is the class
of chordal graphs that we put back to have a complete picture).  The
third column describes a class ${\cal F}_i$.  The last column
describes the class of ${\cal F}_i$-free graphs.  Inclusions between
our classes and several known classes are represented in
Figure~\ref{f:inc} (where the \emph{diamond} is the graph obtained
from $K_4$ by removing one edge, a \emph{cap} is cycle of length at
least~5 with a unique chord joining two vertices at distance 2 on the
cycle, a \emph{d-hole} is 3-wheel such that the center has degree~3,
and the \emph{claw} is $K_{1, 3}$).  Observe that a d-hole is also a
2-wheel. The following theorem follows directly from
Lemmas~\ref{l:main}, \ref{l:union} and~\ref{s2ldp}.

\begin{theorem}
  For $i=1, \dots, 8$, let ${\cal C}_i$ and ${\cal F}_i$ be the
  classes defined as in Table~\ref{t:c}.  For $i= 1, \dots, 8$, the
  class ${\cal C}_i$ is exactly the class of locally ${\cal
    F}_i$-decomposable graphs.
\end{theorem}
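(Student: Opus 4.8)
The plan is to derive all eight equivalences from the three base cases already in hand, treating the classes with a singleton ${\cal F}_i$ directly and the remaining ones by iterating Lemma~\ref{l:union}.

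First I would dispose of the singleton cases. For $i=1,2,3$ we have ${\cal F}_i=\{S_3\}, \{P_3\}, \{\overline{P_3}\}$ respectively, so the three required equivalences are exactly parts~(i),~(ii) and~(iii) of Lemma~\ref{l:main}. For $i=8$ we have ${\cal F}_8=\{S_2\}$; since chordal is the same as hole-free, Lemma~\ref{s2ldp} states precisely that ${\cal C}_8$ is the class of locally $\{S_2\}$-decomposable graphs.

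Next, for $i=4,5,6,7$ I would write each ${\cal F}_i$ as a union of the singleton families above: ${\cal F}_4=\{S_3\}\cup\{P_3\}$, ${\cal F}_5=\{S_3\}\cup\{\overline{P_3}\}$, ${\cal F}_6=\{P_3\}\cup\{\overline{P_3}\}$ and ${\cal F}_7=\{S_3\}\cup\{P_3\}\cup\{\overline{P_3}\}$. All graphs occurring here are non-complete, so Lemma~\ref{l:union} applies (twice, for $i=7$) and identifies the class of locally ${\cal F}_i$-decomposable graphs with the class of graphs excluding the union of the obstruction families supplied by Lemma~\ref{l:main}. It then remains only to check that this union is the list of forbidden configurations defining ${\cal C}_i$. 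For $i=4,5,6$ this is a direct set union with no simplification needed; for instance, for $i=4$ the union of (1-wheel, theta, pyramid) and (3-wheel) is (1-wheel, 3-wheel, theta, pyramid), which is the list defining ${\cal C}_4$.

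The only case requiring an extra remark is $i=7$, and this is where I would expect the (very mild) crux to lie. There the union of the obstruction sets is (1-wheel, 2-wheel, 3-wheel, theta, prism, pyramid), whereas ${\cal C}_7$ is defined as (wheel, prism, theta, pyramid)-free. To reconcile the two I would invoke the observation recorded just before the statement, that every wheel is a 1-wheel, a 2-wheel or a 3-wheel: forbidding all three types is therefore equivalent to forbidding every wheel, so the two lists define the same class. Beyond this single structural remark, the argument is pure bookkeeping, since all the mathematical content already resides in Lemmas~\ref{l:main},~\ref{l:union} and~\ref{s2ldp}.
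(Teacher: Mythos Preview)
Your proposal is correct and matches the paper's own argument: the paper simply states that the theorem follows directly from Lemmas~\ref{l:main}, \ref{l:union} and~\ref{s2ldp}, and your write-up is exactly the unpacking of that sentence, including the observation that every wheel is a 1-wheel, 2-wheel or 3-wheel needed for $i=7$.
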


With Theorem \ref{th:fldp-elim}, this directly implies the following.

\begin{theorem}
  \label{th:m}
  For $i=1, \dots, 8$, let ${\cal C}_i$ and ${\cal F}_i$ be the
  classes defined as in Table~\ref{t:c}. Then every LexBFS ordering of
  a graph of ${\cal C}_i$ is an ${\cal F}_i$-elimination ordering.
\end{theorem}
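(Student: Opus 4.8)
The plan is to treat this as a direct corollary of the two results immediately preceding it, so that almost no new work is required. First I would apply the (unnamed) characterization theorem stated just above, which asserts that for each $i$ the class ${\cal C}_i$ coincides exactly with the class of locally ${\cal F}_i$-decomposable graphs; hence any graph $G \in {\cal C}_i$ is locally ${\cal F}_i$-decomposable. This theorem is itself the payload of the section, resting on Lemmas~\ref{l:main}, \ref{l:union} and~\ref{s2ldp}, so by the time we reach the present statement all the structural content is already available.

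Next I would verify the single hypothesis of Theorem~\ref{th:fldp-elim}, namely that ${\cal F}_i$ is a set of \emph{non-complete} graphs. Reading off the third column of Table~\ref{t:c}, every ${\cal F}_i$ with $1 \le i \le 7$ is a non-empty subfamily of $\{S_3, P_3, \overline{P_3}\}$, and ${\cal F}_8 = \{S_2\}$; none of the four graphs $S_2, S_3, P_3, \overline{P_3}$ is complete, so the hypothesis holds for every $i$. With this checked, Theorem~\ref{th:fldp-elim} applies to $G$ and ${\cal F}_i$ and yields precisely that every LexBFS ordering of $G$ is an ${\cal F}_i$-elimination ordering, which is the desired conclusion.

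I do not anticipate a genuine obstacle here: all the substance has already been spent earlier, in Lemma~\ref{l:main} (the local-decomposability characterizations for the individual three-vertex graphs), Lemma~\ref{l:union} (which shows those characterizations combine correctly under unions of excluded families), Lemma~\ref{s2ldp} (the chordal case for $S_2$), and in the LexBFS theory underlying Theorem~\ref{th:fldp-elim}. The only point demanding any attention is the non-completeness side condition, and even that is immediate; were it to fail for some ${\cal F}_i$, a complete graph lying in ${\cal C}_i$ could be locally ${\cal F}_i$-decomposable yet admit no ${\cal F}_i$-elimination ordering, which is exactly the subtlety the authors flag in the paragraph preceding Theorem~\ref{th:fldp-elim}. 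Since it does not fail, the combination goes through verbatim for all eight classes.
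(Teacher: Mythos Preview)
Your proposal is correct and matches the paper's approach exactly: the paper simply states that Theorem~\ref{th:m} follows directly from the preceding characterization theorem (that each ${\cal C}_i$ equals the class of locally ${\cal F}_i$-decomposable graphs) together with Theorem~\ref{th:fldp-elim}. Your only addition is to spell out the verification that each ${\cal F}_i$ contains only non-complete graphs, which the paper leaves implicit.
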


\subsection*{Known classes}

We now describe the two classes of graphs from Table~\ref{t:c}
that (apart from chordal graphs) were studied before.  The first one is
${\cal C}_7$, i.e. graphs that contain no Truemper configuration, or
equivalently by Theorem~\ref{th:m}, graphs that are ${\cal
  F}_{7}$-locally decomposable. These are studied in~\cite{cckv-u},
where they are called \emph{universally signable graphs}.  The
existence of a vertex whose neighborhood is ${\cal F}_7$-free given
by Theorem~\ref{th:m} is exactly the following theorem
from~\cite{cckv-u}, that was originally proved through a global
decomposition theorem.  Theorem~\ref{th:m} provides a shorter proof as
well as an algorithm that outputs the order 
that does not rely on global decomposition.  In the next subsection,
we study several algorithmic consequences.

\begin{theorem}[Conforti, Cornu\'ejols, Kapoor and Vu\v{s}kovi\'c  \cite{cckv-u}]
  \label{cckv-u-1}
  Every non-empty universally signable graph contains a simplicial
  vertex or a vertex of degree $2$.
\end{theorem}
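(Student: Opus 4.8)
The plan is to invoke Theorem~\ref{th:m} for the class ${\cal C}_7$, which by Table~\ref{t:c} is exactly the class of (wheel, prism, theta, pyramid)-free graphs, that is, the graphs containing no Truemper configuration. As recalled in the text, these are precisely the universally signable graphs, and the associated set is ${\cal F}_7=\{S_3,P_3,\overline{P_3}\}$, the collection of all three non-complete graphs on three vertices.

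First I would establish the elementary characterization recorded in the last column of Table~\ref{t:c}: a graph $H$ is ${\cal F}_7$-free if and only if $H$ is a clique or a stable set of size $2$ (an $S_2$). The only complete graph on three vertices is the triangle, so if $H$ has at least three vertices and contains a non-edge $uv$, then for any third vertex $w$ the triple $\{u,v,w\}$ induces one of $S_3$, $P_3$, $\overline{P_3}$, contradicting ${\cal F}_7$-freeness; hence in that case $H$ must be a clique. On at most two vertices, $H$ is either a clique or an $S_2$, which completes the characterization.

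Next I would apply Theorem~\ref{th:m} to a LexBFS ordering $(v_1,\ldots,v_n)$ of the given non-empty universally signable graph $G$. By that theorem this ordering is an ${\cal F}_7$-elimination ordering, so in particular for $i=n$ the set $N_{G[\{v_1,\ldots,v_n\}]}(v_n)=N_G(v_n)$ is ${\cal F}_7$-free. By the characterization above, $N_G(v_n)$ is either a clique---in which case $v_n$ is simplicial---or an $S_2$, in which case $v_n$ has exactly two non-adjacent neighbours and hence is of degree $2$. Either way $G$ contains a simplicial vertex or a vertex of degree $2$, which is the desired conclusion.

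I do not expect a genuine obstacle here, since all the substantive work is carried out upstream in Theorem~\ref{th:m} (which itself rests on Theorem~\ref{th:fldp-elim}, Lemma~\ref{l:main} and Lemma~\ref{l:union}). The only points requiring minor care are the three-vertex case analysis in the characterization of ${\cal F}_7$-free graphs, and the observation that the conclusion should be read off from the \emph{last} vertex of the ordering, for which the relevant induced subgraph is all of $G$ so that its neighbourhood in $G$ is genuinely ${\cal F}_7$-free.
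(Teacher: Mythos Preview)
Your proposal is correct and is exactly the argument the paper intends: it explicitly says that the existence of a vertex whose neighborhood is ${\cal F}_7$-free given by Theorem~\ref{th:m} \emph{is} this theorem, and your characterization of ${\cal F}_7$-free graphs as cliques or an $S_2$ (hence simplicial or degree~$2$) is precisely the content of the last column of Table~\ref{t:c} for $i=7$. The only thing you add beyond the paper is the routine verification of that table entry, which is fine.
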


The second class that was studied previously is the class of
wheel-free graphs and its super-class ${\cal C}_2$.  These might have
interesting structural properties, as suggested by several subclasses,
see~\cite{aboulkerRTV:propeller} for example for a list of them.  The
next theorem (which follows from Theorem~\ref{th:m} for $i=2$) states
the only non-trivial property that is known to be satisfied by all
wheel-free graphs.  The original proof (due to Chudnovsky who
communicated it to us but did not publish it) is by induction, and the
proof relying on our method is much shorter.

\begin{theorem}[Chudnovsky \cite{c:perso}]
  \label{th:maria}
  Every non-empty 3-wheel-free graph contains a vertex whose
  neighborhood is a disjoint union of cliques.
\end{theorem}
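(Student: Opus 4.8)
The plan is to derive Theorem~\ref{th:maria} directly from the machinery already assembled in the excerpt, so that almost no new work is required. The statement asserts that every non-empty $3$-wheel-free graph contains a vertex whose neighbourhood is a disjoint union of cliques. The chain of implications I would follow is: $3$-wheel-free $\iff$ locally $\{P_3\}$-decomposable (by Lemma~\ref{l:main}(ii)); locally $\{P_3\}$-decomposable $\Rightarrow$ every LexBFS ordering is a $\{P_3\}$-elimination ordering (by Theorem~\ref{th:fldp-elim}, applicable since $P_3$ is non-complete); and a $\{P_3\}$-elimination ordering produces, in its last-processed vertex, a neighbourhood that is $P_3$-free. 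The only point needing a one-line justification is that a graph is $P_3$-free if and only if it is a disjoint union of cliques, which is the elementary fact that forbidding an induced path on three vertices forces every connected component to be complete.

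Concretely, first I would let $G$ be a non-empty $3$-wheel-free graph. By Lemma~\ref{l:main}(ii), $G$ is locally $\{P_3\}$-decomposable. Since $P_3$ is not a complete graph, Theorem~\ref{th:fldp-elim} applies and tells us that any LexBFS ordering $(v_1,\dots,v_n)$ of $G$ is a $\{P_3\}$-elimination ordering. By the definition of an $\cal F$-elimination ordering (taking $i=n$, or more simply taking $i=1$ so that $v_1$ is the first vertex), there is at least one vertex $v$ of $G$ for which $N_{G}(v)$ restricted to the appropriate prefix is $\{P_3\}$-free; I would phrase this as: the elimination ordering guarantees, in the whole graph $G$ itself, a vertex $v$ whose neighbourhood $N_G(v)$ is $P_3$-free. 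Equivalently, one can simply invoke the remark in the introduction that a $\{P_3\}$-elimination ordering is the same as saying every induced subgraph, and in particular $G$, has a vertex with $P_3$-free neighbourhood.

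To finish, I would observe that a graph is $P_3$-free precisely when each of its connected components is a clique: if a component were not complete it would contain two non-adjacent vertices, and a shortest path between them of length at least two contains an induced $P_3$; conversely a disjoint union of cliques plainly has no induced $P_3$. Applying this to $H = G[N_G(v)]$, the $P_3$-freeness of $H$ means $N_G(v)$ induces a disjoint union of cliques, which is exactly the desired property of $v$.

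I do not anticipate a genuine obstacle here, since the substantive content is packaged in Lemma~\ref{l:main}(ii) and Theorem~\ref{th:fldp-elim}, both assumed available. The one place demanding care is purely bookkeeping: making sure I correctly extract a single vertex with the required neighbourhood from the ordering. The cleanest route is to not reason about the LexBFS ordering at all beyond citing Theorem~\ref{th:m} for $i=2$ (which already specialises the whole framework to ${\cal C}_2 = 3$-wheel-free and ${\cal F}_2 = \{P_3\}$), and then only supply the translation ``$P_3$-free neighbourhood $=$ disjoint union of cliques.'' Thus the real proof is a two-sentence argument, and the main thing to get right is citing the correct earlier result rather than any delicate estimate.
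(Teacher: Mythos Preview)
Your proposal is correct and follows essentially the same route as the paper: the paper states that Theorem~\ref{th:maria} ``follows from Theorem~\ref{th:m} for $i=2$,'' which is precisely your chain Lemma~\ref{l:main}(ii) $+$ Theorem~\ref{th:fldp-elim} $+$ the observation that $P_3$-free means disjoint union of cliques. One small correction to your bookkeeping: the vertex you want is $v_n$ (take $i=n$ in the definition of elimination ordering), not $v_1$; taking $i=1$ gives an empty neighbourhood and says nothing useful, but since you immediately recover the right conclusion this does not affect the argument.
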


The following corollary extends a well-known fact: a chordal graph $G$
has at most $n$ maximal cliques.

\begin{corollary}
  \label{col:m}
  A 3-wheel-free graph $G$ has at most $m$ maximal cliques. 
\end{corollary}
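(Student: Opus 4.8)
The plan is to read the corollary directly off the $\{P_3\}$-elimination ordering supplied by Theorem~\ref{th:m} in the case $i=2$. First I would fix a LexBFS ordering $(v_1,\dots,v_n)$ of $G$ and write $G_i=G[\{v_1,\dots,v_i\}]$ for each $i$. By Theorem~\ref{th:m} this ordering is a $\{P_3\}$-elimination ordering, so for every $i$ the graph $N_{G_i}(v_i)$ is $P_3$-free; since a $P_3$-free graph is precisely a disjoint union of cliques (each connected component being a clique), this is the only structural fact I would need.

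Next I would assign to each maximal clique its \emph{last vertex}. Given a maximal clique $K$, let $v_i$ be the vertex of $K$ of largest index. Every other vertex of $K$ is a neighbour of $v_i$ of smaller index, so it lies in $N_{G_i}(v_i)$; hence $K\setminus\{v_i\}$ is a clique contained in the disjoint union of cliques $N_{G_i}(v_i)$. Because the vertices of $K\setminus\{v_i\}$ are pairwise adjacent, they must all lie in a single connected component $C$ of $N_{G_i}(v_i)$, and $C$ is itself a clique. Then $\{v_i\}\cup C$ is a clique containing $K$, so maximality forces $K=\{v_i\}\cup C$. Thus every maximal clique of size at least~$2$ is determined by the pair consisting of its last vertex $v_i$ and one component $C$ of $N_{G_i}(v_i)$.

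To conclude I would build an injection from the maximal cliques of size at least~$2$ into $E(G)$: send $K=\{v_i\}\cup C$ to the edge joining $v_i$ to the lowest-indexed vertex $w$ of $C$. This map is injective, since from the edge $v_iw$ one recovers its higher-indexed endpoint $v_i$, and then $C$ is forced to be the component of $N_{G_i}(v_i)$ containing $w$, which recovers $K$. Hence the number of maximal cliques of size at least~$2$ is at most $m$. The only obstacle — and the one point needing a word of care — is the size-$1$ cliques: a singleton $\{v_i\}$ is a maximal clique exactly when $v_i$ is isolated in $G$, and such cliques escape the injection. So the clean bound of $m$ holds as stated provided $G$ has no isolated vertex (each isolated vertex would otherwise add $1$ to the count, and indeed two isolated vertices already give two maximal cliques with $m=0$); I would simply note that the statement is to be read with isolated vertices excluded, the argument above then giving at most $m$ maximal cliques.
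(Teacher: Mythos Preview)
Your argument is correct and is essentially the paper's proof unfolded into an explicit injection. The paper argues by induction on $m$: it takes a single vertex $v$ of degree $d$ whose neighbourhood is a disjoint union of cliques (Theorem~\ref{th:maria}), observes that $v$ lies in at most $d$ maximal cliques, and applies the induction hypothesis to $G-v$, which has $m-d$ edges. Your version uses the whole $\{P_3\}$-elimination ordering at once and maps each maximal clique to an edge incident with its last vertex; this is exactly the same counting, just packaged without the inductive step. Your observation about isolated vertices is apt and applies equally to the paper's proof: the inductive step there also needs $d\ge 1$, so both arguments tacitly assume $G$ has no isolated vertex.
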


\begin{proof}
  Induction on $m$.  By Theorem~\ref{th:maria}, consider a vertex $v$
  of degree $d$ whose neighborhood is a disjoint union of cliques.  By
  the induction hypothesis, $G-v$ has at most $m-d$ maximal cliques, and
  because of its neighborhhood, $v$ is in at most $d$ maximal cliques.
\end{proof}

\subsection*{Consequences}\label{sec:sa1}

\begin{table}\begin{center}
\begin{tabular}{ccccc}
  $i$&$\chi$-bounded&Max clique&Coloring\\\hline
  \rule{0cm}{.6cm}1&$f(x) = O(x^2/\log x)$&NP-hard~\cite{poljak:74}&NP-hard~\cite{holyer:81}\\
  \rule{0cm}{.6cm}2&No~\cite{Zyk49}&$O(nm)$~\cite{rose.tarjan.lueker:lbfo}&NP-hard~\cite{maffray.preiss:triangle}\\
  \rule{0cm}{.6cm}3&No~\cite{Zyk49}&$O(nm)$&NP-hard~\cite{maffray.preiss:triangle}\\
  \rule{0cm}{.6cm}4&$f(x) = 2x-1$&$O(n+m)$&?\\
  \rule{0cm}{.6cm}5&$f(x) = 2x-1$&$O(nm)$&?\\
  \rule{0cm}{.6cm}6&No~\cite{Zyk49}&$O(n+m)$&NP-hard~\cite{maffray.preiss:triangle}\\
  \rule{0cm}{.6cm}7&$f(x) = \max(3, x)$ \cite{cckv-u}&$O(n+m)$&$O(n+m)$\\
  \rule{0cm}{.6cm}8&$f(x)=x$ \cite{dirac:chordal}&$O(n+m)$ \cite{rose.tarjan.lueker:lbfo}&$O(n+m)$ \cite{rose.tarjan.lueker:lbfo}\\
\end{tabular}
\end{center}
\caption{Several properties of classes defined in Table~\ref{t:c}\label{t:p}}
\end{table}

Table~\ref{t:p} describes several properties of the classes defined in
Table~\ref{t:c}.  We indicate a reference for the properties that are
already known, or follow easily from the given references.  
Let us now explain and prove all these properties.

\vspace{2ex}

Let us analyze the column ``$\chi$-bounded'' of Table~\ref{t:p}.  A
hereditary class of graphs is
\emph{$\chi$-bounded}~(see~\cite{gyarfas:perfect}) if for some
function $f$, every graph $G$ in the class satisfies $\chi(G) \leq
f(\omega(G))$. The column indicates whether the class $C_i$ is
$\chi$-bounded, and if so, gives the smallest known function proving
so.  Classes ${\cal C}_2, {\cal C}_3$ and ${\cal C}_6$ are not
$\chi$-bounded because they contain all triangle-free graphs, and
these may have arbitrarily large chromatic number as first shown by
Zykov~\cite{Zyk49}.  For classes ${\cal C}_1, {\cal C}_4$ and ${\cal
  C}_5$, we may rely on degeneracy.  Say that a hereditary class of
graphs is \emph{$\omega$-degenerate} if there exists a function $g$
such that every non-empty graph in the class has a vertex of degree at
most $g(\omega(G))$.  It is easy to check that by the greedy coloring
algorithm, if a hereditary class of graphs is $\omega$-degenerate with
a non-decreasing function $g$, then it is $\chi$-bounded with function
$g+1$.  The function given for classes ${\cal C}_4$ and ${\cal C}_5$
follows from the fact that these classes are clearly
$\omega$-degenerate with function $g(x) = 2x-2$.  For the class ${\cal
  C}_1$, we use Ramsey theory.  Kim~\cite{kim:95} proved that for some
constant $c$, every graph on $ct^2/\log t$ vertices admits a stable
set of size~3 or a clique of size $t$.  Therefore, the vertex whose
neighborhood is $S_3$-free in any graph in ${\cal C}_1$ proves that
${\cal C}_1$ is $\omega$-degenerate with function $g(x) = O(x^2/\log
x)$.  Observe that the results in this paragraph just improve bounds.
Indeed, a theorem due to K\"uhn and Osthus \cite{kuhnOsthus:04} proves
that theta-free graphs (and therefore graphs in ${\cal C}_1$, ${\cal
  C}_4$ and ${\cal C}_5$) are $\omega$-degenerate, but their function
is quite big.

\vspace{2ex}

Let us now analyze the column ``Max clique'' of Table~\ref{t:p}, that
gives the best complexity of finding a maximum weighted clique in a
graph of the corresponding class.  By a result of
Poljak~\cite{poljak:74}, it is NP-hard to compute a maximum stable set
in a triangle-free graph.  Rephrased in the complement, it is NP-hard
to compute a maximum clique in an $S_3$-free graph, and therefore in
graphs from ${\cal C}_1$.  Finding a maximum weighted clique in ${\cal
  C}_2$ is easy as follows: for every vertex $v$, look for a maximum
weighted clique in $N(v)$, and choose the best clique among these.
This can be implemented by running $n$ times the $O(n+m)$ algorithm of
Rose, Tarjan and Lueker, because $N(v)$ is chordal for every $v$.  In
fact, this algorithm works in the larger class of universal-wheel-free
graphs.

For ${\cal C}_4$, we need to be careful about the complexity
analysis. Here is an algorithm that finds a maximum (weighted) clique
in $G \in {\cal C}_4$.  First by Theorem~\ref{th:m}, we find in linear
time an $\{S_3, P_3\}$-elimination ordering of $G$, say $(v_1, \dots,
v_n)$.  This means that in $G[\{v_1, \dots, v_i\}]$, $N(v_i)$ is a
disjoint union of at most two cliques.  We now show that, having this
order, we can compute a maximum clique in time $O(m)$.  We may assume
that $G$ is connected (otherwise we work on components separately), so
$m\geq n-1$.  Suppose inductively that a maximum clique of $G[\{ v_1,
\dots, v_{n-1}\} ]$ is found in time $O(m-d(v_n))$.  We now take the
vertices of $N(v_n)$ one by one.  We give name $x$ and label $X$ to
first one, and check whether the next ones are adjacent to $x$.  If
so, we give them label $X$.  If some are not adjacent to $x$, we give
name $y$ and label $Y$ to the first one that we meet.  The next
vertices receive label $X$ or $Y$ according to their adjacency to $x$
or $y$.  Note that exactly one of these adjacencies must occur, since
$N(v_n)$ is the union of at most two cliques.  At the end of this
loops, the vertices with label $X$ and $Y$ form  at most two
cliques in $N(v_n)$.  They are identified in time $O(d(v_n))$.  So, we
now know all the maximal cliques of $G[N[v_n]]$ and a maximum clique
of $G[\{ v_1, \dots, v_{n-1}\} ]$.  A maximum clique among these is a
maximum clique of $G$.  All this takes time $O(m-d(v_n)) + O(d(v_n)) =
O(m)$.  Observe that this algorithm relies on a constant time checking
of the adjacency, so it needs the graph to be represented by an
adjacency matrix.  Therefore, the time complexity is $O(n+m)$, but the
space complexity is $O(n^2)$.  Observe also that this algorithm is not
robust. If the input graph is not in ${\cal C}_4$, the output is a set
of vertices, and if it is a clique, we cannot be sure that it has
maximum weight.  Since ${\cal C}_7$ is a subclass of ${\cal C}_4$, we
obtain an algorithm for the maximum clique problem for universally
signable graphs that is fastest than the $O(nm)$-time algorithm that
follows from~\cite{cckv-u}.

For class ${\cal C}_6$, the algorithm is similar to the previous one.
We have to find a maximum clique in $N(v_n)$ in time $O(d(v_n))$.  It
is easy to verify quickly whether the neighbohood of $v_n$ is a
clique or a stable set, and in both cases, it is immediate to find in
time $O(d(v_n))$ a maximum weighted clique in it. We omit further
details.

For ${\cal C}_3$ (that contains ${\cal C}_5$), the algorithm is
similar to the previous one, except that we rely on a
$\{\overline{P_3}\}$-elimination ordering of $G$ instead of an $\{S_3,
P_3\}$-elimination ordering.  As a result, the neighborhood of the
last vertex $v$ is complete multipartite.  We do not know how to find
a maximum clique in $N(v)$ in time $O(d(v))$, so we do not know how to
obtain a linear time algorithm.  Instead, we look for a maximum clique
in $N(v)$ in time $O(m)$, and therefore the overall complexity is
$O(nm)$.

\vspace{2ex}
 
Let us now analyze the column ``Coloring'' of Table~\ref{t:p}, that
gives the best complexity for coloring a graph of the corresponding
class.  Since the edge-coloring problem is NP-hard~\cite{holyer:81},
it follows that coloring line graphs is NP-hard, and therefore,
so is coloring claw-free graphs (that are all in ${\cal C}_1$).  Classes
${\cal C}_2, {\cal C}_3$ and ${\cal C}_6$ contain all triangle-free
graphs, that are NP-hard to color as proved by Preissmann and
Maffray~\cite{maffray.preiss:triangle}.  For ${\cal C}_7$, we first
try to find a 2-coloring of the graph by the classical BFS algorithm.
If it does not exist, we look for a $\max(3, \omega(G))$-coloring of
the input graph $G$ as follows. By Theorem~\ref{th:m} we obtain an
$\{S_3, P_3, \overline{P_3}\}$-elimination ordering in linear time.
As a result, the neighborhood of the last vertex of the ordering is a
clique or has size 2.  We remove the last vertex $v$, color recursively
the remaining vertices, and give some available color to $v$.

\section{Open questions}
\label{sec:oq}

Addario-Berry, Chudnovsky, Havet, Reed and
Seymour~\cite{addarioBerryEtAl:ehf} proved that every even-hole-free
graph admits a vertex whose neighborhood is the union of two cliques.
We wonder whether this result can be proved by some search algorithm. 

Corollary~\ref{col:m} suggests that a linear time algorithm for the
maximum clique problem might exists in ${\cal C}_2$, but we could not
find it.

We are not aware of a polynomial time coloring algorithm for graphs in
${\cal C}_4$ or ${\cal C}_5$, but it would be surprising to us that it
exists.

Since class ${\cal C}_1$ generalizes claw-free graphs, it is natural
to ask which of the properties of claw-free graphs it has, such as a
structural description (see \cite{chudnovsky.seymour:claw4}), a
polynomial time algorithm for the maximum stable set
(see~\cite{faenzaOrioloStauffer:clawFree}), approximation algorithms
for the chromatic number~(see \cite{king:these}), a polynomial time
algorithm for the induced linkage problem (see~\cite{fialaKLP:12}),
and a polynomial $\chi$-binding function (see~\cite{gyarfas:perfect}).
Also we wonder whether theta-free graphs are $\chi$-bounded by a
\emph{polynomial} (quadratic?) function (recall that in~\cite{kuhnOsthus:04},
they are proved to be $\chi$-bounded).

In~\cite{cckv-u}, an $O(nm)$ time algorithm is described for the
maximum weighted stable set problem in ${\cal C}_7$.  Since the class
is a simple generalization of chordal graphs, we wonder whether a
linear time algorithm exists.

\section*{Ackowledgement}

Thanks to Maria Chudnovsky for indicating to us Theorem~\ref{th:maria}
and its proof,  which was the starting point of this research.  Thanks
to Michael Rao and two anonymous referees for comments that helped improve this 
paper.

\end{document}